\numberwithin{equation}{section}
\newcommand{\lyxaddress}[1]{
\par {\raggedright #1
\vspace{1.4em}
\noindent\par}
}
\theoremstyle{plain}
\newtheorem{thm}{\protect\theoremname}[section]
  \theoremstyle{definition}
  \newtheorem{defn}[thm]{\protect\definitionname}
  \theoremstyle{remark}
  \newtheorem{rem}[thm]{\protect\remarkname}
  \theoremstyle{plain}
  \newtheorem{lem}[thm]{\protect\lemmaname}
  \theoremstyle{definition}
  \newtheorem{example}[thm]{\protect\examplename}
  \theoremstyle{plain}
  \newtheorem{prop}[thm]{\protect\propositionname}
  \theoremstyle{plain}
  \newtheorem{conjecture}[thm]{\protect\conjecturename}
  \theoremstyle{plain}
  \newtheorem{cor}[thm]{\protect\corollaryname}
\numberwithin{equation}{section}
\date{}
\DeclareMathOperator*{\esssup}{ess\,sup} 
\DeclareMathOperator*{\essinf}{ess\,inf} 
  \providecommand{\conjecturename}{Conjecture}
  \providecommand{\corollaryname}{Corollary}
  \providecommand{\definitionname}{Definition}
  \providecommand{\examplename}{Example}
  \providecommand{\lemmaname}{Lemma}
  \providecommand{\propositionname}{Proposition}
  \providecommand{\remarkname}{Remark}
\providecommand{\theoremname}{Theorem}
\begin{document}

\title{Generalizations of Functionally Generated Portfolios with Applications
to Statistical Arbitrage

\global\long\global\long\global\long\def\norm#1{\left\Vert #1\right\Vert }

\global\long\global\long\global\long\def\abs#1{\left\vert #1\right\vert }

\global\long\global\long\global\long\def\set#1{\left\{  #1\right\}  }

\global\long\global\long\global\long\def\eps{\varepsilon}

\global\long\global\long\global\long\def\zero{\odot}

\global\long\global\long\global\long\def\fa{\mathfrak{a}}

\global\long\global\long\global\long\def\cA{\mathcal{A}}

\global\long\global\long\global\long\def\cB{\mathcal{B}}

\global\long\global\long\global\long\def\fb{\mathfrak{b}}

\global\long\global\long\global\long\def\C{\mathbb{C}}

\global\long\global\long\global\long\def\bC{\mathbb{C}}

\global\long\global\long\global\long\def\cC{\mathcal{C}}

\global\long\global\long\global\long\def\fC{\mathfrak{C}}

\global\long\global\long\global\long\def\cD{\mathcal{D}}

\global\long\global\long\global\long\def\bD{\mathbb{D}}

\global\long\global\long\global\long\def\rd{\mathrm{d}}

\global\long\global\long\global\long\def\d{\mathrm{d}}

\global\long\global\long\global\long\def\cE{\mathcal{E}}

\global\long\global\long\global\long\def\cF{\mathcal{F}}

\global\long\global\long\global\long\def\bF{\mathbb{F}}

\global\long\global\long\global\long\def\cG{\mathcal{G}}

\global\long\global\long\global\long\def\fG{\mathfrak{G}}

\global\long\global\long\global\long\def\fg{\mathfrak{g}}

\global\long\global\long\global\long\def\bG{\mathbb{G}}

\global\long\global\long\global\long\def\bH{\mathbb{H}}

\global\long\global\long\global\long\def\cH{\mathcal{H}}

\global\long\global\long\global\long\def\cI{\mathcal{I}}

\global\long\global\long\global\long\def\cK{\mathcal{K}}

\global\long\global\long\global\long\def\bK{\mathbb{K}}

\global\long\global\long\global\long\def\bL{\mathbb{L}}

\global\long\global\long\global\long\def\cL{\mathcal{L}}

\global\long\global\long\global\long\def\fL{\mathfrak{L}}

\global\long\global\long\global\long\def\cM{\mathcal{M}}

\global\long\global\long\global\long\def\fm{\mathfrak{m}}

\global\long\global\long\global\long\def\N{\mathbb{N}}

\global\long\global\long\global\long\def\bN{\mathbb{N}}

\global\long\global\long\global\long\def\cN{\mathcal{N}}

\global\long\global\long\global\long\def\cO{{\normalcolor \mathcal{O}}}

\global\long\global\long\global\long\def\bP{\mathbb{P}}

\global\long\global\long\global\long\def\cP{\mathcal{P}}

\global\long\global\long\global\long\def\fP{\mathfrak{P}}

\global\long\global\long\global\long\def\fp{\mathfrak{p}}

\global\long\global\long\global\long\def\bQ{\mathbb{Q}}

\global\long\global\long\global\long\def\cQ{\mathcal{Q}}

\global\long\global\long\global\long\def\fq{\mathfrak{q}}

\global\long\global\long\global\long\def\fr{\mathfrak{r}}

\global\long\global\long\global\long\def\R{\mathbb{R}}

\global\long\global\long\global\long\def\bR{\mathbb{R}}

\global\long\global\long\global\long\def\cR{\mathcal{R}}

\global\long\global\long\global\long\def\fR{\mathfrak{R}}

\global\long\global\long\global\long\def\bS{\mathbb{S}}

\global\long\global\long\global\long\def\cS{\mathcal{S}}

\global\long\global\long\global\long\def\fs{\mathfrak{s}}

\global\long\global\long\global\long\def\cT{\mathcal{T}}

\global\long\global\long\global\long\def\bT{\mathbb{T}}

\global\long\global\long\global\long\def\sT{\mathscr{T}}

\global\long\global\long\global\long\def\ft{\mathfrak{t}}

\global\long\global\long\global\long\def\fT{\mathfrak{T}}

\global\long\global\long\global\long\def\bU{\mathbb{U}}

\global\long\global\long\global\long\def\cU{\mathcal{U}}

\global\long\global\long\global\long\def\fu{\mathfrak{u}}

\global\long\global\long\global\long\def\bV{\mathbb{V}}

\global\long\global\long\global\long\def\cV{\mathcal{V}}

\global\long\global\long\global\long\def\fv{\mathfrak{v}}

\global\long\global\long\global\long\def\cX{\mathcal{X}}

\global\long\global\long\global\long\def\fX{\mathfrak{X}}

\global\long\global\long\global\long\def\cY{\mathcal{Y}}

\global\long\global\long\global\long\def\fY{\mathfrak{Y}}

\global\long\global\long\global\long\def\fy{\mathfrak{y}}

\global\long\global\long\global\long\def\bZ{\mathbb{Z}}

\global\long\global\long\global\long\def\cZ{\mathcal{Z}}

\global\long\global\long\global\long\def\fZ{\mathfrak{Z}}

\global\long\global\long\global\long\def\I{\mathbf{1}}

\global\long\global\long\global\long\def\cemetery{\dagger}

\global\long\global\long\global\long\def\D#1#2{\frac{\partial#1}{\partial#2}}

\global\long\global\long\global\long\def\DD#1#2{\frac{\partial^{2}#1}{\partial#2^{2}}}

$\global\long\global\long\global\long\def\vec#1{\mbox{\boldmath\ensuremath{#1}}}
$

\global\long\global\long\global\long\def\wt#1{\widetilde{#1}}

\global\long\global\long\global\long\def\1{\mathbf{1}}

\global\long\global\long\global\long\def\uI{\mathbf{\hat{1}}}

\global\long\global\long\global\long\def\2{\mathbf{1}}

\global\long\global\long\global\long\def\asto{\xrightarrow{\text{a.s.}}}

\global\long\global\long\global\long\def\Lto{\xrightarrow{L^{1}}}

\global\long\global\long\global\long\def\Lpto{\xrightarrow{L^{p}}}

\global\long\global\long\global\long\def\asLto{\xrightarrow{L^{1}, \text{ a.s.}}}

\global\long\global\long\global\long\def\imply{\Rightarrow}

\global\long\global\long\global\long\def\nimply{\nRightarrow}

\global\long\global\long\global\long\def\limply{\Longrightarrow}

\global\long\global\long\global\long\def\leftexp#1#2{{{\vphantom{#2}}^{#1}{#2}}}

\global\long\global\long\global\long\def\var{\textrm{Var}}

\global\long\global\long\global\long\def\std{\textrm{Std}}

\global\long\global\long\global\long\def\bias{\textrm{Bias}}

\global\long\global\long\global\long\def\corr{\textrm{Corr}}

\global\long\global\long\global\long\def\cov{\textrm{Cov}}

\global\long\global\long\global\long\def\tr{\textrm{tr}}

\global\long\global\long\global\long\def\diag{\textrm{diag}}

\global\long\global\long\global\long\def\sgn{\textrm{sign}}

\global\long\global\long\global\long\def\csch{\textrm{csch}}

\global\long\global\long\global\long\def\sech{\textrm{sech}}

\global\long\global\long\global\long\def\bbmid{\big|}

\global\long\global\long\global\long\def\bmid{\textrm{\ensuremath{\Big|}}}

\global\long\global\long\global\long\def\Bmid{\bigg|}

\global\long\global\long\global\long\def\BBmid{\Bigg|}

}

\author{Winslow Strong%
\thanks{The author gratefully acknowledges financial support from the National
Centre of Competence in Research ``Financial Valuation and Risk Management''
(NCCR FINRISK), project D1 (Mathematical Methods in Financial Risk
Management), as well as from the ETH Foundation.%
}}

\maketitle

\lyxaddress{\noindent \begin{center}
ETH Zürich, Department of Mathematics \\
CH-8092 Zürich, Switzerland\\
winslow.strong@math.ethz.ch
\par\end{center}}
\begin{abstract}
\noindent The theory of functionally generated portfolios (FGPs) is
an aspect of the continuous-time, continuous-path Stochastic Portfolio
Theory of Robert Fernholz. FGPs have been formulated to yield a \emph{master
equation} - a description of their return relative to a passive (buy-and-hold)
benchmark portfolio serving as the numéraire. This description has
proven to be analytically very useful, as it is both pathwise and
free of stochastic integrals. Here we generalize the class of FGPs
in several ways: (1) the numéraire may be any strictly positive wealth
process, not necessarily the market portfolio or even a passive portfolio;
(2) generating functions may be stochastically dynamic, adjusting
to changing market conditions through an auxiliary continuous-path
stochastic argument of finite variation. These generalizations do
not forfeit the important tractability properties of the associated
master equation. We show how these generalizations can be usefully
applied to scenario analysis, statistical arbitrage, portfolio risk
immunization, and the theory of mirror portfolios.
\end{abstract}
\begin{singlespace}
\noindent \textbf{Keywords: }Stochastic Portfolio Theory, functionally
generated portfolio, statistical arbitrage, portfolio theory, portfolio
immunization, mirror portfolio, master equation

\noindent \textbf{Mathematics Subject Classification: }91G10 $\cdot$
60H30

\noindent \textbf{JEL Classification: }G11 $\cdot$ C60
\end{singlespace}

\section{Introduction and background}

Functionally generated portfolios (FGPs) were introduced by Robert
Fernholz in \cite{Art:Fernholz:PortGenFuncts:1995,Art:Fernholz:PortGenFunct:1999},
see also \cite{Book:Fernholz:SPT:2002,Art:Karatzas&Fernholz:SPTReview:2009}.
They have historically been constructed by selecting a deterministic
generating function that takes the market portfolio as its argument.
They are notable for admitting a description of their performance,
relative to a passive (buy-and-hold) numéraire, that is both pathwise
and free of stochastic integrals. This description is known as the
\emph{master equation}, and is a useful tool for portfolio analysis
and optimization.

In markets that are uniformly elliptic and diverse \cite{Art:FernKaratzKard:DiversityAndRelArb:2005},
and more generally those markets with sufficient intrinsic volatility
\cite{Art:Fernholz&Karatzas:RelArbVolStab:2005}, FGPs yield explicit
portfolios that are arbitrages relative to the market portfolio (although
see \cite{Art:WinslowFouque:RegDivArb:2010} for an alternative diverse
market model that is compatible with no-arbitrage). In more general
equity market models, FGPs are useful for exploiting certain statistical
regularities, such as the stability of the distribution of capital
over time \cite{Book:Fernholz:SPT:2002,Art:Karatzas&Fernholz:SPTReview:2009},
and the non-constancy of the rate of variance of log-prices as a function
of sampling interval \cite{Art:Fernholz&Maguire:Stat_Arb}. These
FGP-derived portfolios are best described as \emph{statistical arbitrages
}\cite{Art:Fernholz&Maguire:Stat_Arb}, since they exploit the aforementioned
statistical regularities in the data to achieve favorable risk-return
profiles. One of the main attractions of the techniques presented
in this paper will undoubtedly be towards characterizing and optimizing
such statistical arbitrage portfolios.

This paper is organized as follows: Section \ref{Sec:Setting_Defs}
defines a market model typical of those used in Stochastic Portfolio
Theory. Section \ref{Sect:Generalizations} extends the class of FGPs
from its historical definition by allowing an \emph{arbitrary }wealth
process to serve as numéraire, rather than restricting it to be the
market portfolio or a more general passive portfolio. Generating functions
are also extended to accommodate continuous-path \emph{auxiliary stochastic
arguments }of finite variation\emph{. }Section \ref{Sub:Scenario}
highlights the usefulness of FGPs for \emph{scenario analysis}.\emph{
}Sections \ref{Sub:Trans_Equi} and \ref{Sub:Gauge} provide some
characterization of equivalence classes of FGPs in general, and in
the case of passive numéraires, respectively. Section \ref{Sec-Stat_Arb}
explores two approaches of applying FGPs to statistical arbitrage:
extending the original idea from \cite{Art:Fernholz&Maguire:Stat_Arb}
and using a new construction based on quadratic generating functions.
Section \ref{Sect-Immunization} presents a method of immunizing a
given FGP from certain market risks, while keeping it in the family
of FGPs. Section \ref{Sec-Mirror_Ports} extends the notion of mirror
portfolios introduced in \cite{Art:FernKaratzKard:DiversityAndRelArb:2005}
and analyzes their asymptotic behavior. Section \ref{Sect:Conclusion}
summarizes the results, and poses some remaining challenges to tackle
for the theory of FGPs.

\section{\noindent \label{Sec:Setting_Defs}Setting and definitions}

\noindent The market consists of $n$ processes with prices $X_{t}=(X_{1,t},\ldots,X_{n,t})^{\prime}$,
one of which may be a money market account. $X$ lives on a filtered
probability space $(\Omega,\cF,\mathbb{F}:=\{\cF_{t}\}_{t\ge0},\bP)$,
which supports a $d$-dimensional Brownian motion $W_{t}$, where
$d\ge n$. All processes introduced are assumed to be progressive
with respect to $\bF$, which may be strictly bigger than the Brownian
filtration.

\noindent Most of the analysis herein will take place on log prices
$L_{t}:=\log X_{t}$. The dynamics of $L$ are given by
\begin{align*}
dL_{t} & =\gamma_{t}dt+\sigma_{t}dW_{t},
\end{align*}
where $\gamma$ and $\sigma$ are $\bF$-progressive and satisfy
\begin{align*}
\sum_{i=1}^{n}\int_{0}^{t}\left(|\gamma_{i,s}|+a_{ii,s}\right)ds & <\infty,\quad\forall t\ge0,\\
\mbox{where }a_{ij,t}: & =[\sigma_{t}\sigma_{t}^{\prime}]_{ij}=\frac{d}{dt}\left\langle \log X_{i},\log X_{j}\right\rangle _{t},\quad1\le i,j\le n,
\end{align*}
and $\sigma_{t}$ takes values in $\R^{n\times d}$, $d\ge n$, $\forall t\ge0$.
Throughout, all equalities hold merely almost surely. The notation
$\I:=(1,\ldots,1)$ is used, where the dimensionality should be clear
from the context.
\begin{defn}
\noindent A \emph{portfolio $\pi$ on} $X$ is an $\bF$-progressively
measurable $\R^{n}$-valued process satisfying
\begin{align}
\int_{0}^{t}\left(\left|b_{\pi,s}\right|+\pi_{s}^{\prime}a_{s}\pi_{s}\right)ds & <\infty,\quad\forall t\ge0,\label{Eq:Integrability_Cond_to_be_a_Port}\\
\mbox{and }\sum_{i=1}^{n}\pi_{i,t} & =1,\quad\forall t\ge0.\nonumber
\end{align}
The \emph{wealth process $V^{v,\pi}$ }arising from investment according
to $\pi$ is given by
\begin{align*}
d\log V_{t}^{v,\pi} & =\gamma_{\pi,t}dt+\sigma_{\pi,t}dW_{t},\\
V_{0}^{v,\pi} & =v\in(0,\infty),\\
\mbox{where }\sigma_{\pi}: & =\pi^{\prime}\sigma,\\
\gamma_{\pi,t}: & =\pi_{t}^{\prime}\gamma_{t}+\gamma_{\pi,t}^{*},\\
\mbox{and }\gamma_{\pi,t}^{*}: & =\frac{1}{2}\left(\sum_{i=1}^{n}\pi_{i,t}a_{ii,t}-\pi_{t}^{\prime}a_{t}\pi_{t}\right).
\end{align*}

\end{defn}
\noindent The process $\gamma_{\pi}^{*}$ is called the \emph{excess
growth rate}, and plays an important role in Stochastic Portfolio
Theory \cite{Book:Fernholz:SPT:2002,Art:Karatzas&Fernholz:SPTReview:2009}.
To ease notation, we will use $V_{t}^{\pi}:=V_{t}^{1,\pi}$ and often
omit the subscript ``$t$'' when referring to processes.
\begin{rem}
\label{Rem:Synthetic}There is no need for $X$ to be restricted to
be the directly tradeable assets of a market. Some components may
also be wealth processes of portfolios on the tradeable assets, e.g.
$X_{i}=V^{\nu}$, where $\nu$ is a portfolio on the $m<n$ tradeable
assets. A weighting of $\pi_{i}$ in $X_{i}$ is equivalent to (additional)
weights of $\pi_{i}(\nu_{1},\ldots,\nu_{m})$ in the tradeable assets
(beyond what $\pi$ already explicitly specifies for those assets).
This flexibility may seem needlessly confusing, but it is useful when
portfolios are constructed with consideration to certain market segments
- e.g. value, growth, large/small cap, sectors, countries, etc. It's
also used in Section \ref{Sec-Stat_Arb} below. Distinguishing between
the directly tradeable assets and portfolios assembled on them can
be important if and when transaction costs and liquidity constraints
are taken into consideration, as turnover and leverage may be vastly
different. Those issues are beyond the scope of this paper.
\end{rem}
\noindent The following notation will prove useful:
\begin{align}
X^{\rho}: & =\frac{X}{V^{\rho}},\nonumber \\
L^{\rho}: & =\log X^{\rho},\nonumber \\
a_{ij}^{\rho}: & =[\sigma\sigma^{\prime}]_{ij}=\frac{d}{dt}\left\langle L_{i}^{\rho},L_{j}^{\rho}\right\rangle ,\nonumber \\
 & =a_{ij}-[a\rho]_{j}-[a\rho]_{i}+a_{\rho\rho},\label{Eq:Rel_Covar_in_terms_of_Covar}\\
a_{\pi\pi}: & =\sigma_{\pi}\sigma_{\pi}^{\prime}=\pi^{\prime}a\pi=\frac{d}{dt}\left\langle \log V^{\pi},\log V^{\pi}\right\rangle ,\nonumber \\
a_{\pi\pi}^{\rho}: & =\pi^{\prime}a^{\rho}\pi=\frac{d}{dt}\left\langle \log\left(\frac{V^{\pi}}{V^{\rho}}\right),\log\left(\frac{V^{\pi}}{V^{\rho}}\right)\right\rangle =a_{\rho\rho}^{\pi}.\nonumber
\end{align}
The \emph{numéraire invariance property }of $\gamma_{\pi}^{*}$ holds
for arbitrary portfolios $\pi$ and $\rho$:
\begin{align}
\gamma_{\pi}^{*} & =\frac{1}{2}\left(\sum_{i=1}^{n}\pi_{i}a_{ii}^{\rho}-a_{\pi\pi}^{\rho}\right),\label{Eq:Num_Invar_Prop_g_Star}
\end{align}

((3.5) of \cite{Art:Karatzas&Fernholz:SPTReview:2009}), and in particular,
since $a_{\rho\rho}^{\rho}=0$ by (\ref{Eq:Rel_Covar_in_terms_of_Covar}),
\begin{align}
\gamma_{\rho}^{*} & =\frac{1}{2}\sum_{i=1}^{n}\rho_{i}a_{ii}^{\rho}.\label{Eq:g_Star_Intr_Vol}
\end{align}
When $\rho$ is exclusively invested in the money market, then discounted
quantities will be denoted by $\hat{X}:=X^{\rho}$, $\hat{V}^{\pi}:=V^{\pi}/X^{\rho}$,
etc. Note also that in this case $a^{\rho}=a$.

\section{\noindent \label{Sect:Generalizations}Generalizations of functionally
generated portfolios}

Functionally generated portfolios were first introduced in \cite{Art:Fernholz:PortGenFuncts:1995,Art:Fernholz:PortGenFunct:1999},
see also \cite{Book:Fernholz:SPT:2002,Art:Karatzas&Fernholz:SPTReview:2009},
and the recent extension \cite{Art:Pal_Wong_Energy_Entropy_Arbitrage_2013}.
There are two generalizations presented here:
\begin{enumerate}
\item The numéraire in the master equation may be arbitrary. Previously,
it had been taken to be the market portfolio or some other passive
(buy-and-hold) portfolio.
\item Generating functions may take stochastic arguments, which here we
limit to finite-variation processes.
\end{enumerate}

\subsection{\noindent Stochastic generating functions and arbitrary numéraires}

\noindent It is natural to adjust a portfolio based on changing market
conditions. However, FGPs adjust their weights only as a deterministic
function of the underlying discounted price process $X^{\rho}$, which
doesn't allow for much flexibility. Ideally, one would like to be
able to modify the generating function stochastically while preserving
a useful pathwise description of relative return that is free from
stochastic integrals. As a step in this direction, time-dependent
generating functions have already been introduced in \cite{Book:Fernholz:SPT:2002}.
In this section we extend that idea to allow a dependence on auxiliary
stochastic processes of finite variation.

With respect to the historical work on portfolio generating functions,
we formulate them here in the log sense with logarithmic argument.
Specifically, our generating function $H$ is related to the previous
notion of generating function $G$ by $H(y)=\log G(e^{y})$. This
makes the analysis cleaner for our purposes.

\noindent For $H:\R^{n}\times\R^{k}\to\R$, let $\nabla_{l}$ be the
gradient with respect to the first ($n$-dim) argument of $H$, $\nabla_{f}$
be the gradient with respect to the second ($k$-dim) argument, and
$D_{l_{i}l_{j}}^{2}$, $D_{i,j}^{2}$ be the second-order differential
operator with respect to components $i$ and $j$ of the first argument
of $H$, and generally, respectively.
\begin{thm}
\noindent \label{Thm:Master_Eq_Stoch_GF}Let $H\in C^{2,1}(\R^{n}\times\R^{k},\R)$
and let $F$ be an $\R^{k}$-valued, $\bF$-progressive, continuous-path
process of finite variation. Then the portfolio
\begin{align}
\pi & =\lambda\rho+\nabla_{l}H(L^{\rho},F),\qquad\mbox{where }\lambda=1-\I^{\prime}\nabla_{l}H(L^{\rho},F),\label{Eq:FGP_Stoch_Def}
\end{align}
satisfies the following master equation:
\begin{align}
\log\left(\frac{V_{T}^{\pi}}{V_{T}^{\rho}}\right) & =H(L_{T}^{\rho},F_{T})-H(L_{0}^{\rho},F_{0})-\int_{0}^{T}\left[\nabla_{f}H(L_{t}^{\rho},F_{t})\right]^{\prime}dF_{t}+\int_{0}^{T}h_{t}dt,\label{Eq:Master_Stoch}\\
\mbox{where }h & =\gamma_{\pi}^{*}-\lambda\gamma_{\rho}^{*}-\frac{1}{2}\sum_{i,j=1}^{n}D_{l_{i}l_{j}}^{2}H\left(L^{\rho},F\right)a_{ij}^{\rho}.\nonumber
\end{align}
When the argument $F$ is not present (or constant), then it may be
suppressed. Hence
\begin{align}
\pi & =\lambda\rho+\nabla H(L^{\rho}),\qquad\mbox{where }\lambda=1-\I^{\prime}\nabla H(L^{\rho}),\label{Eq:FGP_def}
\end{align}
satisfies the following master equation:
\begin{align}
\log\left(\frac{V_{T}^{\pi}}{V_{T}^{\rho}}\right) & =H(L_{T}^{\rho})-H(L_{0}^{\rho})+\int_{0}^{T}h_{t}dt,\label{Eq:Master_General}\\
\mbox{where }h & =\gamma_{\pi}^{*}-\lambda\gamma_{\rho}^{*}-\frac{1}{2}\sum_{i,j=1}^{n}D_{ij}^{2}H\left(L^{\rho}\right)a_{ij}^{\rho}.\label{Eq:vol_cap_def_std}
\end{align}
\end{thm}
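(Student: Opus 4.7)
The plan is to apply It\^o's formula to $H(L^{\rho}, F)$, compute $d\log(V^{\pi}/V^{\rho})$ directly from the definition of $\pi$, and then identify the three pieces (finite-variation $H$-terms, $dF$-term, and residual drift $h$) by matching diffusion coefficients and isolating the remaining $dt$-drift.

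Step 1: Since $F$ is continuous of finite variation, $d\langle L^{\rho},F\rangle=0$ and $F$ contributes no quadratic-variation term. Because $L^{\rho}$ is a continuous semimartingale with $d\langle L_{i}^{\rho},L_{j}^{\rho}\rangle=a_{ij}^{\rho}dt$, It\^o's formula yields
\begin{align*}
dH(L^{\rho},F) & =\nabla_{l}H(L^{\rho},F)^{\prime}dL^{\rho}+\nabla_{f}H(L^{\rho},F)^{\prime}dF+\tfrac{1}{2}\sum_{i,j=1}^{n}D_{l_{i}l_{j}}^{2}H(L^{\rho},F)\,a_{ij}^{\rho}\,dt.
\end{align*}
Writing $dL_{i}^{\rho}=(\gamma_{i}-\gamma_{\rho})\,dt+(\sigma_{i}-\sigma_{\rho})\,dW$ and substituting $\mathbf{1}^{\prime}\nabla_{l}H=1-\lambda$, the first term becomes $[\nabla_{l}H^{\prime}\gamma-(1-\lambda)\gamma_{\rho}]\,dt+[\nabla_{l}H^{\prime}\sigma-(1-\lambda)\sigma_{\rho}]\,dW$.

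Step 2: From the dynamics of wealth,
\begin{align*}
d\log\!\left(\tfrac{V^{\pi}}{V^{\rho}}\right) & =\big[(\pi-\rho)^{\prime}\gamma+\gamma_{\pi}^{*}-\gamma_{\rho}^{*}\big]dt+(\pi-\rho)^{\prime}\sigma\,dW.
\end{align*}
Plugging $\pi-\rho=(\lambda-1)\rho+\nabla_{l}H$ and again using $\mathbf{1}^{\prime}\nabla_{l}H=1-\lambda$, the $dW$-coefficient is exactly $\nabla_{l}H^{\prime}\sigma-(1-\lambda)\sigma_{\rho}$, which matches the $dW$-coefficient obtained in Step~1. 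Thus the diffusion parts cancel when we form $d\log(V^{\pi}/V^{\rho})-dH(L^{\rho},F)+\nabla_{f}H^{\prime}dF$.

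Step 3: For the residual drift, I collect the $dt$-terms and use $\gamma_{\rho}=\rho^{\prime}\gamma+\gamma_{\rho}^{*}$. The $\nabla_{l}H^{\prime}\gamma$ terms cancel, while $(\lambda-1)\rho^{\prime}\gamma+(1-\lambda)\gamma_{\rho}=(1-\lambda)\gamma_{\rho}^{*}$. Combining with the $\gamma_{\pi}^{*}-\gamma_{\rho}^{*}$ contribution from Step~2 and the Hessian correction from Step~1, the residual drift is exactly
\[
\gamma_{\pi}^{*}-\lambda\gamma_{\rho}^{*}-\tfrac{1}{2}\sum_{i,j=1}^{n}D_{l_{i}l_{j}}^{2}H(L^{\rho},F)\,a_{ij}^{\rho}=h.
\]
Integrating the identity $d\log(V^{\pi}/V^{\rho})=dH(L^{\rho},F)-\nabla_{f}H^{\prime}dF+h\,dt$ from $0$ to $T$ yields \eqref{Eq:Master_Stoch}. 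The reduced form \eqref{Eq:Master_General} is the special case $k=0$ (no $F$-argument), so there is nothing further to prove.

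The proof is essentially a structured bookkeeping computation; the only substantive point is the observation that the constraint $\lambda=1-\mathbf{1}^{\prime}\nabla_{l}H$ is precisely what forces the $dW$-coefficients to agree, eliminating the stochastic integral. Checking that $\pi$ defined by \eqref{Eq:FGP_Stoch_Def} is indeed a portfolio in the sense of \eqref{Eq:Integrability_Cond_to_be_a_Port} is routine: $\mathbf{1}^{\prime}\pi=\lambda+(1-\lambda)=1$, and the integrability conditions follow from $H\in C^{2,1}$, continuity of $L^{\rho}$ and $F$, and the standing assumptions on $\gamma,\sigma$.
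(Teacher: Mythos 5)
Your proof is correct and follows essentially the same route as the paper: apply the It\^o--Doeblin formula to $H(L^{\rho},F)$ (with $F$ contributing only a first-order $dF$ term since it is continuous of finite variation), plug the definition of $\pi$ into the relative log-return dynamics, and observe that the constraint $\lambda=1-\I^{\prime}\nabla_{l}H$ forces the stochastic integrals to cancel, leaving the drift $h=\gamma_{\pi}^{*}-\lambda\gamma_{\rho}^{*}-\frac{1}{2}\sum_{i,j}D_{l_{i}l_{j}}^{2}H\,a_{ij}^{\rho}$. The only difference is presentational: the paper packages the cancellation into two lemmas at the semimartingale level in the $dL^{\rho}$ coordinates (in particular $\sum_{i}\rho_{i}\,dL_{i}^{\rho}=-\gamma_{\rho}^{*}\,dt$), whereas you expand everything into explicit $dt$/$dW$ coefficients and match them, which is the same computation carried out in raw coordinates.
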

\begin{rem}
\noindent \label{Rem:On_Proof_Of_Master}Except for the change to
the log representation, the derivation proceeds analogously to the
original master equation \cite[Theorem 3.1]{Art:Fernholz:PortGenFuncts:1995},
which can also be found in \cite{Art:Fernholz:PortGenFunct:1999,Art:Karatzas&Fernholz:SPTReview:2009}.
The intermediate equations in the earlier derivations are each generalizable
to our setting, shown here as Lemmas \ref{Lem:Master_Int_Result_1}
and \ref{Lem:Rel_Ret_Self-Weighted_Is_Zero}. In the special (original)
case where $X$ is the total capitalization (shares $\times$ price
per share), then normalizing by the initial values so that the \#shares
of each asset is $1$ and choosing $\rho$ to be the market portfolio
results in
\begin{align}
\rho & =X/\sum_{i=1}^{n}X_{i}=X^{\rho}.\label{Eq:Numeraire_as_Market_Port}
\end{align}
Inserting $L:=\log X$ and this $\rho$ into (\ref{Eq:Master_General})
recovers the original master equation. However, in the general setting
of this paper, $\rho$ is arbitrary, making $X^{\rho}$ and $\rho$
distinct.
\end{rem}
\noindent The following two lemmas will be used in the proof of Theorem
\ref{Thm:Master_Eq_Stoch_GF}.
\begin{lem}
\noindent \label{Lem:Master_Int_Result_1}For any two portfolios $\pi$
and $\rho$, the following hold
\begin{align}
d\log\left(\frac{V^{\pi}}{V^{\rho}}\right) & =\sum_{i=1}^{n}\pi_{i}dL_{i}^{\rho}+\gamma_{\pi}^{*}dt,\label{Eq:Log_Rel_Ret_Log_Form}\\
 & =\sum_{i=1}^{n}\pi_{i}\frac{dX_{i}^{\rho}}{X_{i}^{\rho}}-\frac{1}{2}a_{\pi\pi}^{\rho}dt.\label{Eq:Log_Rel_Ret_Ari_Form}
\end{align}
\end{lem}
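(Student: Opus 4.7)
The plan is to derive both equalities by straightforward application of the definitions, with Itô's formula and the numéraire invariance property (\ref{Eq:Num_Invar_Prop_g_Star}) handling the transition between the two forms. Since both $\log V^\pi$ and $\log V^\rho$ have explicit dynamics of the form $d\log V^\nu = (\nu'\gamma + \gamma_\nu^*)\,dt + \nu'\sigma\,dW$, all expressions in question are semimartingales with easily computed decompositions, so the lemma reduces to bookkeeping.

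First I would prove (\ref{Eq:Log_Rel_Ret_Log_Form}) by direct subtraction. Writing $L_i^\rho = L_i - \log V^\rho$ gives
\begin{align*}
dL_i^\rho & = (\gamma_i - \gamma_\rho)\,dt + (\sigma_i - \sigma_\rho)\,dW,
\end{align*}
where $\sigma_i$ denotes the $i$-th row of $\sigma$. Multiplying by $\pi_i$ and summing, and using $\sum_i \pi_i = 1$ to collapse the $\gamma_\rho$ and $\sigma_\rho$ terms, yields
\begin{align*}
\sum_{i=1}^n \pi_i \, dL_i^\rho & = (\pi'\gamma - \gamma_\rho)\,dt + (\pi'\sigma - \sigma_\rho)\,dW.
\end{align*}
On the other hand, $d\log(V^\pi/V^\rho) = (\gamma_\pi - \gamma_\rho)\,dt + (\sigma_\pi - \sigma_\rho)\,dW$, and since $\gamma_\pi = \pi'\gamma + \gamma_\pi^*$ and $\sigma_\pi = \pi'\sigma$, adding $\gamma_\pi^*\,dt$ to the previous display produces exactly $d\log(V^\pi/V^\rho)$, which is (\ref{Eq:Log_Rel_Ret_Log_Form}).

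To obtain (\ref{Eq:Log_Rel_Ret_Ari_Form}), I would apply Itô's formula to $L_i^\rho = \log X_i^\rho$, giving
\begin{align*}
dL_i^\rho & = \frac{dX_i^\rho}{X_i^\rho} - \tfrac{1}{2} a_{ii}^\rho\,dt.
\end{align*}
Substituting this into (\ref{Eq:Log_Rel_Ret_Log_Form}) and then invoking the numéraire invariance property (\ref{Eq:Num_Invar_Prop_g_Star}) to write $\gamma_\pi^* = \tfrac{1}{2}\bigl(\sum_i \pi_i a_{ii}^\rho - a_{\pi\pi}^\rho\bigr)$ cancels the $\tfrac{1}{2}\sum_i \pi_i a_{ii}^\rho\,dt$ term and leaves precisely $\sum_i \pi_i \frac{dX_i^\rho}{X_i^\rho} - \tfrac{1}{2} a_{\pi\pi}^\rho\,dt$.

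There is no real obstacle here; the only point to watch is the role of $\sum_i \pi_i = 1$ in cancelling the $\gamma_\rho$ and $\sigma_\rho$ contributions in step one, and the consistent use of (\ref{Eq:Num_Invar_Prop_g_Star}) in step two — this last identity is exactly what makes $\gamma_\pi^*$ intrinsic to the relative dynamics and what allows the conversion between log-return and arithmetic-return formulations.
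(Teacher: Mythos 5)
Your proposal is correct and follows essentially the same route as the paper: both equalities are obtained by expanding $dL_i^\rho$ (first via $dL_i - d\log V^\rho$ with $\sum_i \pi_i = 1$, then via It\^o's formula applied to $\log X_i^\rho$) and using the num\'eraire invariance property (\ref{Eq:Num_Invar_Prop_g_Star}) to cancel the $\tfrac{1}{2}\sum_i \pi_i a_{ii}^\rho\,dt$ term. Your version merely makes the drift and diffusion coefficients explicit where the paper keeps $d\log V^\rho$ as a single term, which is a cosmetic difference only.
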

\begin{proof}
\noindent To prove (\ref{Eq:Log_Rel_Ret_Log_Form}), by definition
\begin{align*}
dL_{i}^{\rho} & =dL_{i}-d\log V^{\rho},\\
 & =\gamma_{i}dt+\sigma_{i}dW_{t}-d\log V^{\rho}.
\end{align*}
Plugging this into the right-hand side of (\ref{Eq:Log_Rel_Ret_Log_Form}),
we get
\begin{align*}
\sum_{i=1}^{n}\pi_{i}dL_{i}^{\rho}+\gamma_{\pi}^{*}dt & =\sum_{i=1}^{n}\pi_{i}(\gamma_{i}dt+\sigma_{i}dW_{t})-d\log V^{\rho}+\gamma_{\pi}^{*}dt,\\
 & =d\log V^{\pi}-d\log V^{\rho}.
\end{align*}
To prove (\ref{Eq:Log_Rel_Ret_Ari_Form}), use
\begin{align*}
dL_{i}^{\rho} & =\frac{dX_{i}^{\rho}}{X_{i}^{\rho}}-\frac{1}{2}\frac{d\left\langle X_{i}^{\rho},X_{i}^{\rho}\right\rangle }{\left(X_{i}^{\rho}\right)^{2}},\\
 & =\frac{dX_{i}^{\rho}}{X_{i}^{\rho}}-\frac{1}{2}a_{ii}^{\rho}.
\end{align*}
Plugging this into (\ref{Eq:Log_Rel_Ret_Log_Form}) and expanding
$\gamma_{\pi}^{*}$ with the numéraire invariance property (\ref{Eq:Num_Invar_Prop_g_Star})
 yields
\begin{align*}
d\log\left(\frac{V^{\pi}}{V^{\rho}}\right) & =\sum_{i=1}^{n}\pi_{i}\left(\frac{dX_{i}^{\rho}}{X_{i}^{\rho}}-\frac{1}{2}a_{ii}^{\rho}dt\right)+\frac{1}{2}\left(\sum_{i=1}^{n}\pi_{i}a_{ii}^{\rho}-a_{\pi\pi}^{\rho}\right)dt,\\
 & =\sum_{i=1}^{n}\pi_{i}\frac{dX_{i}^{\rho}}{X_{i}^{\rho}}-\frac{1}{2}a_{\pi\pi}^{\rho}dt.\tag*{\qedhere}
\end{align*}
\end{proof}
\begin{lem}
\noindent \label{Lem:Rel_Ret_Self-Weighted_Is_Zero}For any portfolio
$\rho$ on $X$,
\begin{align*}
\sum_{i=1}^{n}\rho_{i}dL_{i}^{\rho} & =-\gamma_{\rho}^{*}dt,\\
\sum_{i=1}^{n}\rho_{i}\frac{dX_{i}^{\rho}}{X_{i}^{\rho}} & =0.
\end{align*}
\end{lem}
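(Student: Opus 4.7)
The plan is to observe that this lemma is an immediate specialization of Lemma \ref{Lem:Master_Int_Result_1} to the case $\pi=\rho$. First I would note that $V^{\rho}/V^{\rho}\equiv 1$, so $d\log(V^{\rho}/V^{\rho})=0$. Then substituting $\pi=\rho$ into (\ref{Eq:Log_Rel_Ret_Log_Form}) gives
\begin{align*}
0 \;=\; \sum_{i=1}^{n}\rho_{i}\,dL_{i}^{\rho} + \gamma_{\rho}^{*}dt,
\end{align*}
which rearranges to the first identity. The symbol $\gamma_{\rho}^{*}$ here coincides with the excess growth rate of $\rho$, consistent with (\ref{Eq:g_Star_Intr_Vol}).

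For the second identity, I would substitute $\pi=\rho$ into (\ref{Eq:Log_Rel_Ret_Ari_Form}) to get
\begin{align*}
0 \;=\; \sum_{i=1}^{n}\rho_{i}\,\frac{dX_{i}^{\rho}}{X_{i}^{\rho}} - \frac{1}{2}a_{\rho\rho}^{\rho}\,dt.
\end{align*}
By the definition in (\ref{Eq:Rel_Covar_in_terms_of_Covar}), $a_{\rho\rho}^{\rho}=\tfrac{d}{dt}\langle \log(V^{\rho}/V^{\rho}),\log(V^{\rho}/V^{\rho})\rangle = 0$, so the drift term vanishes and the claim follows.

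There is no real obstacle in this argument; it is a direct application of the previous lemma. The only point requiring a moment of care is confirming that $a_{\rho\rho}^{\rho}=0$, which one can read off either from the quadratic variation of the constant process $\log(V^{\rho}/V^{\rho})\equiv 0$, or algebraically from (\ref{Eq:Rel_Covar_in_terms_of_Covar}) applied to the portfolio–portfolio covariation (noting the cancellation $a_{\rho\rho}-2a_{\rho\rho}+a_{\rho\rho}=0$).
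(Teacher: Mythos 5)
Your proposal is correct and is exactly the paper's own argument: the paper proves this lemma in one line by setting $\pi=\rho$ in (\ref{Eq:Log_Rel_Ret_Log_Form}) and (\ref{Eq:Log_Rel_Ret_Ari_Form}) and invoking $a_{\rho\rho}^{\rho}=0$ from (\ref{Eq:Rel_Covar_in_terms_of_Covar}). Your added verification of $a_{\rho\rho}^{\rho}=0$ via the cancellation $a_{\rho\rho}-2a_{\rho\rho}+a_{\rho\rho}=0$ is a correct filling-in of the detail the paper leaves implicit.
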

\begin{proof}
\noindent For the first, use (\ref{Eq:Log_Rel_Ret_Log_Form}) with
$\pi=\rho$, and for the second use (\ref{Eq:Log_Rel_Ret_Ari_Form})
with $\pi=\rho$ and $a_{\rho\rho}^{\rho}=0$ from (\ref{Eq:Rel_Covar_in_terms_of_Covar}).
\end{proof}
\noindent Now we prove Theorem \ref{Thm:Master_Eq_Stoch_GF}.
\begin{proof}[Proof of Theorem \ref{Thm:Master_Eq_Stoch_GF}]
\noindent Initially, consider the case where $F\equiv1$, and hence
the second argument to $H$ may be suppressed. First plug in (\ref{Eq:FGP_def})
for $\pi$ into (\ref{Eq:Log_Rel_Ret_Log_Form}), and then get (\ref{Eq:Master_Proof_1})
by applying Lemma \ref{Lem:Rel_Ret_Self-Weighted_Is_Zero}:
\begin{align}
d\log\left(\frac{V^{\pi}}{V^{\rho}}\right) & =\sum_{i=1}^{n}\left(\lambda\rho_{i}+D_{i}H(L^{\rho})\right)dL_{i}^{\rho}+\gamma_{\pi}^{*}dt,\nonumber \\
 & =\sum_{i=1}^{n}D_{i}H(L^{\rho})dL_{i}^{\rho}+\left(\gamma_{\pi}^{*}-\lambda\gamma_{\rho}^{*}\right)dt,\label{Eq:Master_Proof_1}
\end{align}
where $D_{i}$ is the first derivative operator with respect to the
$i$th component. Expanding $dH(L^{\rho})$ gives
\begin{align*}
dH(L^{\rho}) & =\sum_{i=1}^{n}D_{i}H(L^{\rho})dL_{i}^{\rho}+\frac{1}{2}\sum_{i,j=1}^{n}D_{ij}^{2}H(L^{\rho})d\left\langle L_{i}^{\rho},L_{j}^{\rho}\right\rangle ,\\
 & =\sum_{i=1}^{n}D_{i}H(L^{\rho})dL_{i}^{\rho}+\frac{1}{2}\sum_{i,j=1}^{n}a_{ij}^{\rho}D_{ij}^{2}H(L^{\rho})dt.
\end{align*}
Plugging this into (\ref{Eq:Master_Proof_1}) yields
\begin{align}
d\log\left(\frac{V^{\pi}}{V^{\rho}}\right) & =dH(L^{\rho})+\left(\gamma_{\pi}^{*}-\lambda\gamma_{\rho}^{*}-\frac{1}{2}\sum_{i,j=1}^{n}a_{ij}^{\rho}D_{ij}^{2}H(L^{\rho})\right)dt,\label{Eq:Master_wo_Stoch}
\end{align}
proving the case when $F\equiv1$. For a finite variation $F$ with
continuous paths, the Itô-Doeblin formula yields
\begin{align*}
dH(L^{\rho},F) & =\left(dL^{\rho}\right)^{\prime}\nabla_{l}H(L^{\rho},F)+\left(dF\right)^{\prime}\nabla_{f}H(L^{\rho},F)+\frac{1}{2}\sum_{i,j=1}^{n}dL_{i}^{\rho}dL_{j}^{\rho}D_{l_{i}l_{j}}^{2}H(L^{\rho},F),
\end{align*}
which when combined with (\ref{Eq:Master_wo_Stoch}) proves the theorem.
\end{proof}
\noindent While adding an auxiliary stochastic process $F$ causes
FGPs to lose some elegance and tractability (comparing (\ref{Eq:Master_General})
to (\ref{Eq:Master_Stoch})), the extra flexibility gained can be
useful in practice. For example, $F$ may be factors that inform portfolio
construction, such as those of Fama and French \cite{Art:Fama&French_Cross_Sect_Exp_Stock_Ret_1992,Art:Fama&French_Common_Risk_Factors_1993},
fundamental economic data such as bond yields or stock market diversity
\cite{Art:Fernholz:OnDivEqMark:1999,Book:Fernholz:SPT:2002,Art:FernKaratzKard:DiversityAndRelArb:2005,Art:Karatzas&Fernholz:SPTReview:2009},
or information extracted from Twitter feeds \cite{Art:Bollen_etal_Twitter_Mood_Predicts_Stock_Market_2011}.
\begin{rem}[Generalizations]
It's possible to remove the restrictions on $F$ - that it's finite
variation and has continuous paths - to derive a more general master
equation, but this would make the correction term $\int_{0}^{T}\left[\nabla_{f}H(L_{t}^{\rho},F_{t})\right]^{\prime}dF_{t}$
of (\ref{Eq:Master_Stoch}) more complex. A continuous $F$ of finite
variation is sufficient for the applications that follow, so we do
not pursue these extensions here.

If a portfolio $\nu$ satisfies (\ref{Eq:Master_General}) in place
of $\pi$, then $V^{\nu}$ must be indistinguishable from $V^{\pi}$.
Hence, any differences between $\nu$ and $\pi$ are not meaningful
in the context of wealth-creation. However, there are portfolios obeying
generalizations of the master equation for which $H\notin C^{2}$,
such as the class presented in Theorem 4.1 of \cite{Art:Pamen-FGPs-2011}.
\end{rem}
\noindent The following example looks at the strategy of switching
from an initial FGP to a subsequent one at a stopping time. The overall
portfolio is an example of a stochastic FGP.
\begin{example}[Stochastic switching between FGPs%
\footnote{The author wishes to thank Radka Pickova for suggesting this idea.%
}]
Let $H_{1}$ and $H_{2}$ be arbitrary generating functions and let
$H$ be
\begin{align*}
H(y,i): & =iH_{1}(y)+(1-i)H_{2}(y),\qquad y\in\R^{n},\; i\in\R.
\end{align*}
For an arbitrary stopping time $\tau,$ $H(\cdot,\I_{t\le\tau})$
is a \emph{stochastic generating function} (i.e. a function $H$ meeting
the requirements of Theorem \ref{Thm:Master_Eq_Stoch_GF} with auxiliary
$F$) which will generate an FGP that switches at $\tau$ from $\pi^{(1)}$
generated by $H_{1}$ to $\pi^{(2)}$ generated by $H_{2}$. This
type of portfolio was used by Banner and D. Fernholz in \cite{Art:Banner&DFernholz:ShortTermRelArbVolStab:2008}
for constructing arbitrages relative to the market portfolio at arbitrarily
short deterministic horizons $T>0$ in a class of market models including
volatility-stabilized markets \cite{Art:Fernholz&Karatzas:RelArbVolStab:2005}.
Those models have also been shown to admit functionally generated
relative arbitrage over sufficiently long time horizons \cite{Art:Fernholz&Karatzas:RelArbVolStab:2005}.
However, there is a horizon before which functionally generated relative
arbitrage is not possible, regardless of the choice of generating
function \cite{UNP:Whitman}. This example shows that relative arbitrages
exist on arbitrarily short horizons within the class of FGPs that
have \emph{stochastic }generating functions.
\end{example}

\subsection{\label{Sub:Scenario}Pathwise returns for scenario analysis}

One of the main analytical benefits of the master equation is that
it is free of stochastic integrals. When formulas for portfolio returns
contain stochastic integrals, then correct analysis may be counterintuitive,
and incorrect analysis may be intuitively appealing, as the following
example demonstrates.
\begin{example}
\label{Ex:Port_Close_Ret_Not_Close} Consider two portfolios on horizons
$0\le t\le T$: Let $\pi$ be a long-only constant-weight portfolio
so that $\pi_{t}=p$, $0\le t\le T$, and let $\tilde{\pi}$ be a
passive (buy-and-hold) portfolio starting from the same initial allocation
$\tilde{\pi}_{0}=p$. Consider the set $A_{\varepsilon}:=\left\{ \omega\in\Omega\mid\left\Vert \tilde{\pi}_{t}(\omega)-\pi_{t}(\omega)\right\Vert <\varepsilon,\;0\le t\le T\right\} $.
In some models for $X$, (e.g. geometric Brownian motion), $P(A_{\varepsilon})>0$,
$\forall\varepsilon>0$. Recalling that the usual description of the
wealth process of an arbitrary portfolio $\theta$ is
\begin{align}
\log V_{T}^{\theta} & =\int_{0}^{T}\gamma_{\theta,t}dt+\int_{0}^{T}\theta_{t}^{\prime}\sigma_{t}dW_{t},\label{Eq:Port_Return_Usual_Description}
\end{align}
then we may write
\begin{align*}
\left|\log V_{T}^{\pi}-\log V_{T}^{\tilde{\pi}}\right| & \le\int_{0}^{T}\left|\gamma_{\pi,t}-\gamma_{\tilde{\pi},t}\right|dt+\left|\int_{0}^{T}\left(\pi_{t}-\tilde{\pi}_{t}\right)^{\prime}\sigma_{t}dW_{t}\right|.
\end{align*}
If $\gamma$ and $\sigma$ are bounded, then it is tempting to make
the \emph{erroneous} conclusion that
\begin{align}
\lim_{\varepsilon\to0}\esssup_{\omega\in A_{\varepsilon}}\left|\log V_{T}^{\pi}(\omega)-\log V_{T}^{\tilde{\pi}}(\omega)\right| & =0.\label{Eq:Erroneous_Port_Return}
\end{align}
The \emph{erroneous} conclusion might be stated in words as:
\begin{quotation}
\emph{If two portfolios remain sufficiently close to each other, then
their returns must be close.}
\end{quotation}
The erroneous conclusion can be avoided by noting that $\pi$ and
$\tilde{\pi}$ are functionally generated by generating functions
$H(y)=\sum_{i=1}^{n}p_{i}y_{i}$, and $\tilde{H}(y)=\log\left(\sum_{i=1}^{n}p_{i}e^{y_{i}-l_{i}}\right)$,
respectively, where $l=L_{0}\in\R^{n}$. Comparing their wealth processes
via their master equations gives the pathwise equation
\begin{align}
\log V_{T}^{\pi}-\log V_{T}^{\tilde{\pi}} & =H(L_{T})-H(L_{0})-\left(\tilde{H}(L_{T})-\tilde{H}(L_{0})\right)+\int_{0}^{T}\gamma_{p,s}^{*}ds.\label{Eq:Rel_Ret_Discrepancy_CWP_Passive}
\end{align}
If the covariance $a$ is uniformly elliptic (there exists $u>0$
such that $y^{\prime}a_{t}y\ge u\left\Vert y\right\Vert ^{2},$ for
all $y\in\R^{n}$, $t\ge0$), and if $p_{i}>0$, $\forall i$, then
$\gamma_{p,t}^{*}\ge u\in(0,\infty)$, $\forall t\ge0$ \cite[Lemma 3.4]{Art:Karatzas&Fernholz:SPTReview:2009}.
For all $\xi>0$ there exists $\varepsilon>0$ such that $\left\Vert L_{T}-L_{0}\right\Vert <\xi$
on $A_{\varepsilon}$. Hence, (\ref{Eq:Rel_Ret_Discrepancy_CWP_Passive})
and the continuity of $H$ and $\tilde{H}$ imply that
\begin{align*}
\lim_{\varepsilon\to0}\essinf_{A_{\varepsilon}}\left\{ \log V_{T}^{\pi}-\log V_{T}^{\tilde{\pi}}\right\}  & =\lim_{\varepsilon\to0}\essinf_{A_{\varepsilon}}\left\{ \int_{0}^{T}\gamma_{p,s}^{*}ds\right\} \ge Tu,
\end{align*}
contradicting (\ref{Eq:Erroneous_Port_Return}). This correct conclusion
is simply obtainable from the pathwise representation of return given
by the master equation, but is difficult to arrive at from the traditional
representation of return given in (\ref{Eq:Port_Return_Usual_Description}).
The upshot is that
\begin{quotation}
\emph{Just because two portfolios remain arbitrarily close does not
imply that their returns are close.}
\end{quotation}
\end{example}
More generally, the master-equation description of relative return
(\ref{Eq:Master_General}) has advantages over the usual description
(\ref{Eq:Port_Return_Usual_Description}) for \emph{scenario analysis},
a technique currently popular in investment management (e.g. see \cite{Art:Mina&Xiao_Return_to_RiskMetrics}).
In the short term, an FGP's performance (particularly its potential
for loss) is largely attributable to the first term of (\ref{Eq:Master_General}),
which is entirely determined by the terminal values of the underlying
assets, which themselves are outputs of scenario analysis. Whereas
the last term involves the quadratic variation of the path, and is
usually easier to estimate with high precision. In contrast, knowledge
of the terminal values of the assets is difficult to use in the Itô-integral
formulation (\ref{Eq:Port_Return_Usual_Description}).

\subsection{\noindent \label{Sub:Trans_Equi}Translation equivariance and numéraire
invariance}

\noindent Generating functions are overspecified in the following
sense. Given a generating function $H$, each member of the equivalence
class of generating functions
\begin{align}
[H] & :=\{\kappa+H\mid\kappa\in\R\}\label{Eq:Equiv_Class_Linear_Scaling}
\end{align}
yields the same function $\nabla H$. Hence, given an arbitrary market
$X$ and numéraire $\rho$, any member of $[H]$ yields the same functionally
generated portfolio (\ref{Eq:FGP_def}).
\begin{defn}
\noindent $H:\R^{n}\to\R$ is \emph{translation equivariant }if
\begin{align*}
H(y+\I\kappa) & =\kappa+H(y),\quad\forall y\in\R^{n}.
\end{align*}

\end{defn}
\noindent When $H$ is translation equivariant, then $H$ and hence
its corresponding FGP depend only on the \emph{relative} rather than
\emph{absolute} price level. An example of a class of translation-equivariant
generating functions is the diversity-$p$ family (see \cite{Book:Fernholz:SPT:2002,Art:Karatzas&Fernholz:SPTReview:2009}):
\begin{align*}
H_{p}(y) & =\frac{1}{p}\log\left(\sum_{i=1}^{n}\exp\{py_{i}\}\right),\qquad y\in\R^{n}.
\end{align*}
 The following is an invariance property of the master equation
when $H$ is translation equivariant.
\begin{prop}
\noindent \label{Prop:Master_Eq_for_PH_GF}If $H\in C^{2}(\R^{n},\R)$
is translation-equivariant, then $\lambda=0$ in (\ref{Eq:FGP_def}),
and (\ref{Eq:Master_General}) exhibits no sensitivity to numéraire
choice. Specifically, for arbitrary portfolios $\rho$, $\phi$, $\eta$,
and $\nu$,
\begin{align}
\log\frac{V_{T}^{\pi}}{V_{T}^{\rho}} & =H(L_{T}^{\rho})-H(L_{0}^{\rho})+\int_{0}^{T}h_{t}dt,\label{Eq:Master_Trans_Equi}\\
\mbox{where }\pi & =\nabla\log H(L^{\phi}),\nonumber \\
\mbox{and }h & =\gamma_{\pi}-\frac{1}{2}\sum_{i,j=1}^{n}D_{ij}^{2}H(L^{\eta})a_{ij}^{\nu}.\label{Eq:g_For_PH_FGPs}
\end{align}
\end{prop}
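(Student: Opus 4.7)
The strategy is to extract two families of identities from translation equivariance and then feed them into Theorem~\ref{Thm:Master_Eq_Stoch_GF}. First I differentiate the defining relation $H(y+\I\kappa) = \kappa + H(y)$ in $\kappa$ at $\kappa=0$ to obtain the pointwise identity $\I^{\prime}\nabla H(y) = 1$, which immediately forces $\lambda = 1 - \I^{\prime}\nabla H(L^{\rho}) = 0$. Differentiating this identity once more in $y_i$ yields $\sum_{j=1}^n D_{ij}^2 H(y) = 0$ for every $i$; combined with Hessian symmetry $D_{ij}^2 H = D_{ji}^2 H$, this makes every row sum and every column sum of $D^2 H$ vanish. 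Separately, differentiating the defining relation in $y$ gives translation invariance of the derivatives: $\nabla H(y+\I\kappa) = \nabla H(y)$ and $D_{ij}^2 H(y+\I\kappa) = D_{ij}^2 H(y)$.

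Next I will exploit that, for any two portfolios $\phi$ and $\rho$, the processes $L^{\phi}$ and $L^{\rho}$ differ at each time by the common scalar $\log V^{\rho} - \log V^{\phi}$ added to every coordinate, i.e.\ by a multiple of $\I$. Translation invariance then yields $\nabla H(L^{\phi}) = \nabla H(L^{\rho})$ and $D_{ij}^2 H(L^{\eta}) = D_{ij}^2 H(L^{\rho})$ for arbitrary $\phi, \eta$. In particular, $\pi = \nabla H(L^{\phi})$ is literally the same portfolio regardless of the reference $\phi$.

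To remove the $\nu$-dependence of the correction term, I will expand via (\ref{Eq:Rel_Covar_in_terms_of_Covar}):
\begin{align*}
\sum_{i,j=1}^{n} D_{ij}^{2} H(L^{\rho}) a_{ij}^{\nu} & = \sum_{i,j} D_{ij}^{2} H(L^{\rho})\Bigl(a_{ij} - [a\nu]_i - [a\nu]_j + a_{\nu\nu}\Bigr).
\end{align*}
The three ``correction'' contributions all vanish: $\sum_{i,j} D_{ij}^2 H\cdot [a\nu]_j = \sum_j [a\nu]_j\sum_i D_{ij}^2 H = 0$ (column sums), $\sum_{i,j} D_{ij}^2 H\cdot [a\nu]_i = 0$ (row sums), and $a_{\nu\nu}\sum_{i,j} D_{ij}^2 H = 0$ for the same reason. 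Hence $\sum_{i,j} D_{ij}^{2} H(L^{\rho}) a_{ij}^{\nu}$ is independent of both $\eta$ and $\nu$.

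Plugging $\lambda = 0$ and these invariances into the master equation (\ref{Eq:Master_General}) of Theorem~\ref{Thm:Master_Eq_Stoch_GF} produces (\ref{Eq:Master_Trans_Equi}) with $h$ of the stated form. The argument is essentially bookkeeping; the one point requiring care is to invoke Hessian symmetry so that \emph{both} row and column sums of $D^2 H$ vanish, which is what licenses $\nu$ to range over all portfolios rather than only a restricted subclass.
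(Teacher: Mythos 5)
Your proof is correct and takes essentially the same route as the paper's: differentiate the translation-equivariance identity to get $\I^{\prime}\nabla H\equiv1$ (hence $\lambda=0$) and vanishing row/column sums of $D^{2}H$, then expand $a^{\nu}$ via (\ref{Eq:Rel_Covar_in_terms_of_Covar}) to kill the num\'eraire correction terms, and use translation invariance of the derivatives to replace $L^{\rho}$ by $L^{\phi}$ and $L^{\eta}$. Your explicit appeals to Hessian symmetry and to the fact that $\pi=\nabla H(L^{\phi})$ is literally the same portfolio for every $\phi$ merely make explicit what the paper leaves implicit.
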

\begin{proof}
\noindent Starting with (\ref{Eq:Master_General}), we show that when
$H$ is translation equivariant, then $\lambda$ of (\ref{Eq:FGP_def})
is identically $0$:
\begin{align*}
\frac{\partial}{\partial\kappa}H(y+\I\kappa) & =\frac{\partial}{\partial\kappa}\left(H(y)+\kappa\right)=1,\\
\frac{\partial}{\partial\kappa}H(y+\I\kappa) & =\sum_{i=1}^{n}\frac{\partial(y_{i}+\kappa)}{\partial\kappa}D_{i}H(y+\I\kappa)=\sum_{i=1}^{n}D_{i}H(y+\I\kappa)=1-\lambda.
\end{align*}
Next, we show that $a^{\rho}$ may be formally replaced with $a^{\nu}$
in (\ref{Eq:vol_cap_def_std}). We note that
\begin{align*}
\sum_{i=1}^{n}D_{ij}^{2}H(y) & =D_{j}\sum_{i=1}^{n}D_{i}H(y)=D_{j}\left(1\right)=0,\quad\forall y\in\R^{n}.
\end{align*}
Using this and the form (\ref{Eq:Rel_Covar_in_terms_of_Covar}) for
$a^{\rho}$ yields
\begin{align*}
\sum_{i,j=1}^{n}D_{ij}^{2}Ha_{ij}^{\rho} & =\sum_{i,j=1}^{n}\left(a_{ij}-[a\rho]_{j}-[a\rho]_{i}-a_{\rho\rho}\right)D_{ij}^{2}H,\\
 & =\sum_{i,j=1}^{n}D_{ij}^{2}Ha_{ij}-2\sum_{j}[a\rho]_{j}D_{j}\sum_{i=1}^{n}D_{i}H-a_{\rho\rho}\sum_{i,j=1}^{n}D_{ij}^{2}H,\\
 & =\sum_{i,j=1}^{n}D_{ij}^{2}Ha_{ij}.
\end{align*}
Reversing the steps shows that $a$ may be replaced with $a^{\nu}$
for arbitrary $\nu$.

\noindent It remains to show that $L^{\rho}$ may be replaced with
$L^{\eta}$ as the argument to $D^{2}H$:
\begin{align*}
D_{ij}^{2}H(y+\I\kappa) & =D_{ij}^{2}\left(H(y)+\kappa\right)=D_{ij}^{2}H(y),\quad\forall y\in\R^{n},\;\forall\kappa\in\R.
\end{align*}
Since $L^{\rho}=L-\log V^{\rho}$, this implies that
\begin{align*}
h & =\gamma_{\pi}^{*}-\frac{1}{2}\sum_{i,j=1}^{n}D_{ij}^{2}H(L^{\rho})a_{ij}^{\nu}=\gamma_{\pi}^{*}-\frac{1}{2}\sum_{i,j=1}^{n}D_{ij}^{2}H(L^{\eta})a_{ij}^{\nu}.\tag*{\qedhere}
\end{align*}
\end{proof}
\begin{rem}
An FGP can be thought of as a $\Delta$-hedge for its generating function,
as it eliminates stochastic integrals from $dH(L^{\rho})$. Under
this interpretation, $\lambda$ is the position taken in the numéraire
with the leftover money in the portfolio. The numéraire $\rho$ sets
a stochastic relative price level for $V^{\pi}$ and $L$ in the master
equation. When $H$ is translation equivariant, then the corresponding
FGP and wealth process have no sensitivity to the price level, as
can be seen by Proposition \ref{Prop:Master_Eq_for_PH_GF} and
\begin{align*}
H(L^{\rho}) & =H(L-\I\log V^{\rho})=H(L)-\log V^{\rho},
\end{align*}
which simplifies (\ref{Eq:Master_Trans_Equi}) to
\begin{align}
\log V_{T}^{\pi} & =H(L_{T})-H(L_{0})+\int_{0}^{T}h_{t}dt.\label{Eq:Master_Eq_Trans_Eq}
\end{align}
Generally, each choice of numéraire results in a unique master equation.
But when $H$ is translation equivariant, then there's no excess exposure
to the numéraire (beyond what's needed to $\Delta$-hedge $H$), making
the master equations arising from different numéraire choices trivial
translations of the same one equation (\ref{Eq:Master_Eq_Trans_Eq}).
\end{rem}

\subsection{\noindent \label{Sub:Gauge}Passive numéraires and gauge freedom}

The historical work on FGPs \cite{Art:Fernholz:PortGenFuncts:1995,Art:Fernholz:PortGenFunct:1999,Book:Fernholz:SPT:2002,Art:Karatzas&Fernholz:SPTReview:2009}
takes $X$ as the total capitalizations  and the numéraire $\rho$
as the market portfolio, leading to $\rho=X^{\rho}$ (see \ref{Eq:Numeraire_as_Market_Port}),
as in Remark \ref{Rem:On_Proof_Of_Master}. The important property
of the market portfolio that was exploited in those works was its
passivity on $X$ (see Definition \ref{Def:Passive}). In the traditional
case, the equality of $\rho$ and $X^{\rho}$ means that $\sum_{i=1}^{n}X_{i}^{\rho}=1$,
hence the generating function need not be defined on all of $\R^{n}$.
In this section we explore more generally to what extent a passive
numéraire allows a reduction of the domain of the generating function.
\begin{defn}
\noindent \label{Def:Passive}A portfolio $\rho$ is \emph{passive
}if there exists a constant $s\in\R^{n}$, called the \emph{shares},
such that
\begin{align*}
V^{\rho} & =s^{\prime}X\qquad\mbox{and}\qquad\rho_{i}=\frac{s_{i}X_{i}}{s^{\prime}X},\quad1\le i\le n.
\end{align*}

\end{defn}
\noindent Passive portfolios are untraded after the initial allocation,
so are unaffected by transaction costs and other liquidity concerns.
Generally, the $X_{i}$ are unbounded from above, so in order that
$V^{\rho}>0$ is guaranteed, we assume henceforth that any passive
portfolio is long-only. That is, that $s\in[0,\infty)^{n}\setminus\{0\}$.

\noindent When the numéraire $\rho$ is passive, then a generating
function need not be defined on all of $\R^{n}$, as $X^{\rho}$ will
be confined to a hyperplane. Let $s\in[0,\infty)^{n}\setminus\{0\}$
be the constant vector of shares such that $V^{\rho}=s^{\prime}X$.
Then,
\begin{align*}
s^{\prime}X^{\rho} & =\frac{s^{\prime}X}{V^{\rho}}=\sum_{i=1}^{n}\rho_{i}=1.
\end{align*}
Thus, $X^{\rho}$ is confined to a hyperplane of codimension $1$,
and it should be sufficient to define $H$ on
\begin{align*}
E_{s}^{n}: & =\{y\in\R^{n}\mid\sum_{i=1}^{n}s_{i}e^{y_{i}}=1\}.
\end{align*}
However, Theorem \ref{Thm:Master_Eq_Stoch_GF} uses the Cartesian
coordinate system, which is quite convenient, so we sacrifice some
generality and require generating functions to be defined on a neighborhood
in $\R^{n}$ containing $E_{s}^{n}$.
\begin{defn}
\noindent Let $s\in[0,\infty)^{n}\setminus\{0\}$ and let the passive
portfolio $\rho$ be given by $\rho_{i}=s_{i}X_{i}/s^{\prime}X$,
$1\le i\le n$. A $\rho$-\emph{generating function }is a function
$H\in\cC^{2}(U,\R)$, where $U$ is a neighborhood in $\R^{n}$ containing
$E_{s}^{n}$.
\end{defn}
\noindent When $\rho$ is the market portfolio and $X$ is the total
capitalization, then $s\propto\I$. In \cite[Proposition 3.1.14]{Book:Fernholz:SPT:2002},
where generating functions are specified as $G(x):=\exp\{H(\log x)\}$,
the following equivalence is demonstrated: Generating functions $H_{1}$
and $H_{2}$ generate the same portfolio if and only if $H_{1}-H_{2}$
is constant on $E_{\I}^{n}$. This generalizes to general passive
portfolios as follows.
\begin{prop}
\noindent \label{Prop:Gauge_Invar}Let \textup{$\rho$} be passive
with corresponding shares $s\in[0,\infty)^{n}\setminus\{0\}$. Let
$H_{1}$ and $H_{2}$ be two $\rho$-generating functions defined
on a neighborhood $U$ containing $E_{s}^{n}$. Then $H_{1}$ and
$H_{2}$ generate the same portfolio for any realization of $X$ if
and only if $H_{1}-H_{2}$ is constant on \textup{$E_{s}^{n}$.} \end{prop}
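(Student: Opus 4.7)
The plan is to parametrize the FGP from (\ref{Eq:FGP_def}) as a deterministic function of $y := L^\rho$ on $E_s^n$. Writing $\rho(y) := (s_i e^{y_i})_{i=1}^n$ for $y \in E_s^n$ recovers the passive weights, since $\rho_i = s_i X_i/V^\rho = s_i e^{L_i^\rho}$. The FGP then reads
\[
\pi(y) = \rho(y) + \nabla H(y) - \bigl(\I^\prime \nabla H(y)\bigr)\rho(y).
\]
Because $\rho$ is passive, $L^\rho$ is confined to $E_s^n$; conversely, by choosing $X$ appropriately, any point of $E_s^n$ is realized by $L^\rho$. Thus $H_1$ and $H_2$ generate the same portfolio for every realization of $X$ iff $\pi_1(y) = \pi_2(y)$ for all $y \in E_s^n$, reducing the claim to a purely geometric statement on $E_s^n$.

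The second step identifies the geometry of $E_s^n$. Since $E_s^n$ is the level set $\{\phi = 1\}$ of the smooth function $\phi(y) := \sum_{i=1}^n s_i e^{y_i}$, and $\nabla \phi(y) = \rho(y)$ is nonvanishing on $E_s^n$ (as $s \ne 0$), $\rho(y)$ is a nonzero normal vector to $E_s^n$ at each $y$, and $v \in \R^n$ is tangent to $E_s^n$ at $y$ iff $v^\prime \rho(y) = 0$. Moreover, $E_s^n$ is connected: after reindexing so that some $s_n > 0$, it is the graph $y_n = \log\bigl((1 - \sum_{i<n} s_i e^{y_i})/s_n\bigr)$ over the open convex set $\{y^\prime \in \R^{n-1} : \sum_{i<n} s_i e^{y_i^\prime} < 1\}$ (with free coordinates wherever $s_i = 0$).

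For the ``if'' direction, if $H_1 - H_2$ is constant on $E_s^n$, then its tangential derivative along $E_s^n$ vanishes, forcing $\nabla H_1(y) - \nabla H_2(y) = \mu(y)\,\rho(y)$ for some scalar $\mu(y)$. Using $\I^\prime \rho = 1$,
\[
\pi_1(y) - \pi_2(y) = \mu(y)\rho(y) - \bigl(\I^\prime \mu(y)\rho(y)\bigr)\rho(y) = \mu(y)\rho(y) - \mu(y)\rho(y) = 0.
\]
For the ``only if'' direction, setting $d(y) := \nabla H_1(y) - \nabla H_2(y)$ and $\alpha(y) := \I^\prime d(y)$, the hypothesis $\pi_1(y) = \pi_2(y)$ gives $d(y) = \alpha(y)\rho(y)$, i.e. $d(y)$ is parallel to $\rho(y)$ and thus orthogonal to every tangent vector of $E_s^n$ at $y$. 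Hence the tangential gradient of $H_1 - H_2$ along $E_s^n$ vanishes identically, and by connectedness $(H_1 - H_2)|_{E_s^n}$ is constant.

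The main conceptual point is recognizing $\rho(y)$ as the outward normal to $E_s^n$; this is what makes the constraint $\I^\prime\rho = 1$ align perfectly with the gauge degree of freedom from (\ref{Eq:Equiv_Class_Linear_Scaling}), reducing both implications to a one-line computation together with the connectedness of $E_s^n$. Minor bookkeeping is required when some $s_i = 0$, but then $y_i$ is unconstrained on $E_s^n$ and $\rho_i(y) = 0$, so all identities go through unchanged.
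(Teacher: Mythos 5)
Your proof is correct and follows essentially the same route as the paper's: both reduce the claim to a pointwise identity on $E_{s}^{n}$, recognize $\rho(y)=(s_{1}e^{y_{1}},\ldots,s_{n}e^{y_{n}})$ as the normal to the level surface $E_{s}^{n}$, and translate equality of the generated portfolios into $\nabla(H_{1}-H_{2})\propto\rho(y)$, i.e.\ a vanishing tangential gradient, using $\I^{\prime}\rho=1$ to close the converse direction. Your explicit verification that $E_{s}^{n}$ is connected (as a graph over a convex domain) and your bookkeeping for coordinates with $s_{i}=0$ fill in details the paper leaves implicit, but they do not change the argument.
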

\begin{proof}
\noindent Let $\pi^{j}$ be the portfolio generated by $H_{j}$, $j\in\{1,2\}$.
The condition $\pi_{i}^{1}=\pi_{i}^{2}$, $1\le i\le n$ for all realizations
of $X$ is equivalent by (\ref{Eq:FGP_def}) to the following holding
$\forall y\in E_{s}^{n}$:
\begin{align}
\left(1-\sum_{j=1}^{n}D_{j}H_{1}(y)\right)\rho_{i}+D_{i}H_{1}(y) & =\left(1-\sum_{j=1}^{n}D_{j}H_{2}(y)\right)\rho_{i}+D_{i}H_{2}(y),\quad1\le i\le n,\nonumber \\
\Longleftrightarrow\frac{s_{i}e^{y_{i}}}{\sum_{k=1}^{n}s_{k}e^{y_{k}}}\sum_{j=1}^{n}D_{j}\left(H_{1}(y)-H_{2}(y)\right) & =D_{i}(H_{1}(y)-H_{2}(y)),\quad1\le i\le n,\nonumber \\
\imply(s_{1}e^{y_{1}},\ldots,s_{n}e^{y_{n}}) & \propto\nabla\left(H_{1}(y)-H_{2}(y)\right),\quad1\le i\le n,\label{Eq:Gauge_Invar_Proof_1}
\end{align}
Differentiating the equation determining the surface $E_{s}^{n}$
shows that (\ref{Eq:Gauge_Invar_Proof_1}) is equivalent to $\nabla(H_{1}-H_{2})$
being orthogonal to $E_{s}^{n}$, hence equivalent to $H_{1}-H_{2}$
being constant on $E_{s}^{n}$. Conversely, if $H_{1}-H_{2}$ is constant
on $E_{s}^{n}$, then (\ref{Eq:Gauge_Invar_Proof_1}) holds. Since
by definition $\rho\propto(s_{1}e^{y_{1}},\ldots,s_{n}e^{y_{n}})$,
then from (\ref{Eq:FGP_def}) $\pi^{1}-\pi^{2}\propto\rho$. But $\rho$
is long-only, and $\sum_{i=1}^{n}\left(\pi_{i}^{1}-\pi_{i}^{2}\right)=0$.
Thus $\pi^{1}=\pi^{2}$.
\end{proof}
\noindent The gauge freedom implied by Proposition \ref{Prop:Gauge_Invar},
specifically by (\ref{Eq:Gauge_Invar_Proof_1}), is that if $H$ generates
$\pi$, then
\begin{align*}
H_{f}(y): & =f\left(\sum_{i=1}^{n}s_{i}e^{y_{i}}\right)+H(y),\quad y\in U\supset E_{s}^{n},
\end{align*}
also generates $\pi$, for any $f\in\cC^{2}((0,\infty),\R)$. This
gauge freedom allows one to make any convenient choice for $f$ in
order to simplify calculations. The $\rho$-generating functions and
general generating functions have their associated respective equivalence
classes:
\begin{align*}
[H]_{\rho} & :=\{f(s^{\prime}e^{\cdot})+H(\cdot)\mid f\in\cC^{2}((0,\infty),\R)\},\quad\mbox{where }s_{i}:=\frac{\rho_{i,0}}{X_{i,0}},\;1\le i\le n,\\
{}[H] & :=\{H+\kappa\mid\kappa\in\R\}.
\end{align*}
Each member of a given $[H]_{\rho}$ or $[H]$ is equivalent for the
purposes of $\rho$-FGPs, or generally FGPs, respectively.

\section{\label{Sec-Stat_Arb}Statistical arbitrage}

\subsection{\label{Sub-Long-Short_Stat_Arb}Long-short statistical arbitrage
with FGPs}

The paper \cite{Art:Fernholz&Maguire:Stat_Arb} of R. Fernholz and
C. Maguire introduces an idea for a statistical arbitrage strategy
in markets where the realized rate of variance of log market prices
depends on the sampling interval. The general idea is to take a long
position in an FGP that is rebalanced over a time interval corresponding
to a high variance rate, hedged with a short position in an FGP generated
from the \emph{same} generating function, but rebalanced over a \emph{different
}time interval corresponding to a low variance rate. Statistical arbitrage
profits accrue from the different rates of \emph{variance-capture}
(the $h$ of (\ref{Eq:vol_cap_def_std})) - named so because these
terms are directly proportional to the variance rate. Because the
long and short FGPs have the same generating function, their corresponding
$H$ terms of (\ref{Eq:Master_General}) are identical, providing
an effective hedge for each other.

The data presented in \cite{Art:Fernholz&Maguire:Stat_Arb} indicate
that for $2005$, the variance rate was significantly higher at higher
sampling frequencies intradaily for large-cap US equities. The authors
looked at rebalancing the long component at $90$-second intervals
and rebalancing the short component once a day. These choices of rebalancing
intervals were ad hoc, not the output of an optimization problem.
In this section we develop general performance formulas for such long-short
statistical arbitrages, creating a framework for optimizing the selection
of generating function and rebalancing intervals.

We will show that the growth rate of the statistical arbitrage portfolio
always has the quadratic form
\begin{align}
\gamma_{\pi} & =A\kappa-B\kappa^{2},\qquad A,B>0,\label{Eq:Quad_Form_of_FGP_Stat_Arb}
\end{align}
where $\kappa$ is the leverage factor, that is, the weight invested
in the long portfolio. Hence, there is a level of leverage $\bar{\kappa}=A/B$
above which the portfolio tends to shrink in value rather than grow.
The leverage $\check{\kappa}=A/(2B)=\bar{\kappa}/2$ gives the maximal
growth rate of $\check{\gamma}_{\pi}=A^{2}/(4B)$.

To estimate the performance of the strategy described in \cite{Art:Fernholz&Maguire:Stat_Arb},
we use $X=(X_{1},X_{2},X_{3},\ldots,X_{3+m})$, where $X_{3}$ is
the money market, and $X_{3+1},\ldots,X_{3+m}$ are the risky assets
that are directly tradeable on the market (e.g. the equities). $X_{1}$
and $X_{2}$ are the values of long-only portfolios on $(X_{4},\ldots,X_{3+m})$.
The statistical arbitrage portfolio is an FGP specified on the submarket
$(X_{1},X_{2},X_{3})$ (for more detail, see Remark \ref{Rem:Synthetic}).
In \cite{Art:Fernholz&Maguire:Stat_Arb} constant-weight FGPs are
considered, where the overall portfolio $\pi$ has $\pi_{1}=-\pi_{2}=\kappa\in(0,\infty)$,
and $\pi_{3}=1$. This portfolio is functionally generated by the
generating function $H(y)=\kappa\left(y_{1}-y_{2}\right)$, so (\ref{Eq:Master_General})
yields
\begin{align*}
\log\hat{V}_{T}^{\pi} & =\kappa\left(\hat{L}_{1,T}-\hat{L}_{1,0}-\left[\hat{L}_{2,T}-\hat{L}_{2,0}\right]\right)+\int_{0}^{T}h_{s}ds.
\end{align*}
The statistical arbitrage construction uses $X_{1}$ and $X_{2}$
as discretely-traded approximations to \emph{the same }(continuously
traded) FGP, starting from $X_{1,0}=X_{2,0}$. The portfolios differ
only in their rebalancing interval. Their values are approximated
with the master equation, as if they were continuously-traded. This
approximation has been shown empirically to be accurate for diversity-
and entropy-weighted FGP approximations that are rebalanced merely
once a month \cite[Chapter 6]{Book:Fernholz:SPT:2002}. $X_{1}$ and
$X_{2}$ have the same generating functions, $H_{1}=H_{2}$, so under
this approximation their values differ only through their variance-capture
terms, $h_{1}$ and $h_{2}$, which differ only because rebalancing
occurs at different intervals, and hence different effective variance-rates
may (and do in practice) apply. The resulting approximation is
\begin{align}
\log\hat{V}_{T}^{\pi} & \approx\kappa\left(\cancel{H_{1}\left(L_{T}\right)-H_{1}\left(L_{0}\right)}+\int_{0}^{T}h_{1,s}ds-\left[\cancel{H_{2}\left(L_{T}\right)-H_{2}\left(L_{0}\right)}+\int_{0}^{T}h_{2,s}ds\right]\right)+\int_{0}^{T}h_{s}ds.\label{Eq:Stat_Arb_LS_Approx}
\end{align}
The overall portfolio $\pi$ is a constant-weight FGP, hence has $h=\gamma_{\pi}^{*}$
from (\ref{Eq:Master_General}). Therefore,
\begin{align*}
\log\hat{V}_{T}^{\pi} & \approx\int_{0}^{T}\left[\kappa\left(h_{1,s}-h_{2,s}+\frac{a_{11}-a_{22}}{2}\right)-\frac{\kappa^{2}}{2}a_{(1,-1)(1,-1)}\right]ds.
\end{align*}
This has the quadratic form of (\ref{Eq:Quad_Form_of_FGP_Stat_Arb}).
To estimate the parameters, we approximate the stochastic quantities
as constants, and plug in the values of their sample estimators. We
can identify
\begin{align*}
A & =h_{1}-h_{2}+\frac{1}{2}(a_{11}-a_{22}),\\
B & =\frac{1}{2}a_{(1,-1)(1,-1)}.
\end{align*}

In \cite{Art:Fernholz&Maguire:Stat_Arb} the FGP chosen to give the
value processes $X_{1}$ and $X_{2}$ was the equal-weight portfolio
on large cap US equities, specifically those in the S\&P500 and/or
Russell 1000 in 2005. The annualized sample averages for that year
were $a_{11}=.0683$, $a_{22}=.0423$, $a_{(1,-1)(1,-1)}=1.69\times10^{-7}$,
$h_{1}=0.0341$, $h_{2}=0.0211$.%
\footnote{These are the numbers from \cite{Art:Fernholz&Maguire:Stat_Arb},
after transforming standard deviations to variances and annualizing
all numbers:\textbf{ }$a_{11}=250*0.000273$, $a_{22}=250*0.000169$,
$a_{(1,-1)(1,-1)}=250*\left(\frac{0.0026}{100}\right)^{2}$, $h_{1}=0.0341$,
$h_{2}=0.0211$.\textbf{ } %
} These result in $A=0.0260$, $B=8.45\times10^{-8}$, $\check{\kappa}=\frac{A}{2B}=1.54\times10^{5}$,
and $\check{\gamma}_{\pi}=2.00\times10^{3}$ (base $e$).

While $\check{\kappa}$ is not of the order of magnitude usually seen
in portfolio construction, it must be remembered that it is not a
weight for investment into equities, but rather into a long-short
combination of two very diverse portfolios that are \emph{very similar
}nearly always. At the beginning of each day, the long and short portfolios
are equal, so the net position in each equity starts at 0. Each FGP
is an equal-weight portfolio in about 1000 equities. If we approximate
each initial weight as $10^{-3}$ and use a leverage factor of $\kappa=1.5\times10^{5}$,
then an isolated $1\%$ intraday price movement of a particular equity
induces a change in net weight of $1.5$ in that equity. While this
is still an unrealistically leveraged portfolio, it is much closer
to a reasonable order of magnitude considering that it would be offset
by similarly sized positions of opposite sign.

Despite the above remark, the amount of leverage involved in the $\check{\kappa}$
portfolio is prohibitive due to the realities of equity markets that
lie outside of the framework of this paper, such as price jumps, margin
requirements, transaction costs, short-selling fees, liquidity constraints,
etc. It is these factors then that become the limiting ones for the
level of leverage to use in seeking profitability from a statistical
arbitrage portfolio of this type. A more plausible level of leverage
of $\kappa=1\times10^{3}$ results in $\gamma_{\pi}=26$, still orders
of magnitude outside the realm of documented performance.

\subsection{Quadratic generating functions}

This section again considers a market whose log asset prices have
a variance rate varying with the sampling interval (e.g. Figure \ref{Fig:Variance}).
Taylor expanding the generating function term in the master equation
(\ref{Eq:Master_General}) yields
\begin{align}
\log\left(\frac{V_{T}^{\pi}}{V_{T}^{\rho}}\right) & =\left(\Delta_{T}L^{\rho}\right)^{\prime}\nabla H(L_{0}^{\rho})+\frac{1}{2}\left(\Delta_{T}L^{\rho}\right)^{\prime}D^{2}H(L_{0}^{\rho})\Delta_{T}L^{\rho}+R_{T}+\int_{0}^{T}h_{t}dt,\label{Eq:FGP_Taylor_Exp}\\
\mbox{where }\Delta_{t}L: & =L_{t}-L_{0},\nonumber
\end{align}
and $R_{T}$ is the remainder term. For a given path $\omega$, there
is a sufficiently short time horizon such that an FGP generated by
an analytic $H$ behaves nearly as if it were generated by a quadratic
$H$:
\begin{align}
H(y) & =-\frac{1}{2}\left(y-l\right)^{\prime}c\left(y-l\right)+p^{\prime}y,\qquad y,l,p\in\R^{n},\; c\in\R^{n\times n},\label{Eq:Quadratic_GF}\\
\imply\nabla H(y) & =-c\left(y-l\right)+p,\nonumber \\
\imply D^{2}H(y) & =-c.\nonumber
\end{align}
Since statistical arbitrage portfolios are generally rebalanced quite
frequently (intradaily), the above motivates consideration of FGPs
having quadratic $H$ for application in statistical arbitrage. We
assume that the investor has no information, or at least does not
wish to speculate, on the drifts of $L^{\rho}$. If this is the case,
then it makes no sense for him to take on unnecessary exposure to
the $\left(\Delta L_{T}^{\rho}\right)^{\prime}\nabla H(L_{0}^{\rho})$
term in (\ref{Eq:FGP_Taylor_Exp}). This term can be eliminated by
selecting an $H$ satisfying $\nabla H(L_{0}^{\rho})=0$.

To accomplish this initial hedge, take $p=0$ and $l=L_{0}^{\rho}$
in (\ref{Eq:Quadratic_GF}). Then
\begin{align*}
\log\left(\frac{V_{T}^{\pi}}{V_{T}^{\rho}}\right) & =-\frac{1}{2}\left(\Delta_{T}L_{i}^{\rho}\right)^{\prime}c\left(\Delta_{T}L_{i}^{\rho}\right)+\frac{1}{2}\sum_{i,j}c_{ij}\int_{0}^{T}a_{ij,s}^{\rho}ds+\int_{0}^{T}\left(\gamma_{\pi,s}^{*}-\lambda\gamma_{\rho,s}^{*}\right)ds,\\
\pi & =\lambda\rho-c\Delta_{T}L^{\rho}.
\end{align*}
For simplicity in illustrating the idea, we restrict the investment
to one risky asset (possibly a wealth process of a more general portfolio,
as in the previous section) and one locally risk-free asset, i.e.
a money market account, which will be the numéraire. The procedure
is readily generalizable to a bigger market, with the cost being solving
and optimizing vector instead of scalar equations. In this setting
$\gamma_{\rho}^{*}=0$, $L^{\rho}=\hat{L}$, and $\pi=\lambda\rho-c\Delta\hat{L}$.
Since the money market discounted with itself has value one for all
time, then $H$ only need be prescribed on the risky discounted asset's
log price, $\hat{L}$, and hence $c$ is a scalar. The log wealth
is
\begin{align}
\log\hat{V}_{T}^{\pi} & =\frac{1}{2}c\left(\int_{0}^{T}a_{s}ds-\left(\Delta_{T}\hat{L}\right)^{2}\right)+\int_{0}^{T}\gamma_{\pi,s}^{*}ds,\nonumber \\
 & =\frac{1}{2}\left[\int_{0}^{T}\left(ca_{s}+a_{s}\left[c\Delta_{s}\hat{L}\right]_{i}-c^{2}\left[\Delta_{s}\hat{L}\right]^{2}a_{s}\right)ds-c\left(\Delta_{T}\hat{L}\right)^{2}\right].\label{Eq:Quad_FGP_Int}
\end{align}
To proceed, we take an expectation, assume Brownian integrals are
martingales, and approximate some time-dependent parameters as constants.
This simplifies the model, allowing for easy fitting to data:
\begin{align*}
E[a_{t}] & \approx:a\\
E\Delta_{s}\hat{L} & =E\left[\int_{0}^{s}\hat{\gamma}_{u}du\right]\approx:s\hat{\gamma},\\
E\left[\left(\Delta_{s}\hat{L}\right)^{2}\right] & =:A_{s}.
\end{align*}
Note: $a$ is best interpreted as $A_{t}/t$ at the $t$ at which
trading actually occurs. Using the above in (\ref{Eq:Quad_FGP_Int})
yields
\begin{align}
E\log\left(\hat{V}_{T}^{\pi}\right) & \approx\frac{1}{2}\left[acT+ac\hat{\gamma}\int_{0}^{T}sds-ac^{2}\int_{0}^{T}A_{s}ds-cA_{T}\right],\nonumber \\
 & =\frac{Ta}{2}\left(c\left[1-\frac{A_{T}}{Ta}+\frac{\hat{\gamma}T}{2}\right]-c^{2}v(T)\right),\label{Eq:Stat_Arb_Ret_w_Drift}\\
\mbox{where }v(T): & =\frac{1}{T}\int_{0}^{T}A_{s}ds.\nonumber
\end{align}
By assumption, $A_{t}/at$ is not identically $1$. If its deviation
from $1$ is not substantially greater than $|\hat{\gamma}t/2|$ for
some $t>0$, then our exposure to $\hat{\gamma}$ results in large
risk for little gain. In any case, we are ignorant of the drift, so
we drop it and are left with
\begin{align*}
\frac{1}{T}E\log\left(\hat{V}_{T}^{\pi}\right) & \approx\frac{a}{2}\left(c\left[1-\frac{A_{T}}{Ta}\right]-c^{2}v(T)\right).
\end{align*}
All that is needed is the variogram for $\hat{L}$ , from which the
other quantities are easily derived. The $c$ that maximizes the expected
log growth by horizon $T$ is
\begin{align*}
\check{c}_{T} & =\frac{1}{2v(T)}\left(1-\frac{A_{T}}{Ta}\right).
\end{align*}
This yields a maximal expected log-growth rate of
\begin{align}
\frac{1}{T}E\log\left(\hat{V}_{T}^{\pi}\right) & =\frac{a}{8v(T)}\left(1-\frac{A_{T}}{Ta}\right)^{2}.\label{Eq:Stat_Arb_QGF_Rate_Ret}
\end{align}
Empirically, the quantity $A_{T}/T$ tends to a constant for large
$T$, thus $v(T)=\cO(T)$ for $T\to\infty$. This means that the growth
rate tends to $0$ as $T\to\infty$. The question then arises of what
the optimal period $\check{T}$ is for restarting this strategy. This
can be obtained by maximizing (\ref{Eq:Stat_Arb_QGF_Rate_Ret}) as
a function of $T$.
\begin{conjecture}
Rather than restarting the portfolio after a given time period, it
may be better to solve an optimal control problem, restarting when
the price of the risky asset wanders sufficiently far from its origin.
\end{conjecture}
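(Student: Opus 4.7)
The plan is to formalize the conjecture as a comparison between deterministic-horizon and stopping-time restart rules for the quadratic-$H$ FGP of Section \ref{Sec-Stat_Arb}. First I would fix a tractable Markovian model for the discounted log price (say $\hat L$ a one-dimensional diffusion that is positively recurrent at its origin, or simply Brownian motion with constant variance rate $a$) and cast the cyclic strategy as a renewal process: at each restart the generator in (\ref{Eq:Quadratic_GF}) is re-centered at $l = \hat L_{\text{restart}}$, so successive cycles are i.i.d. For any integrable $\bF$-stopping time $\tau$, the elementary renewal theorem then yields the almost-sure long-run growth rate
\begin{align*}
g(\tau) & := \lim_{t\to\infty}\frac{1}{t}\log\hat V_t^\pi = \frac{E[\log\hat V_\tau^\pi]}{E[\tau]}.
\end{align*}

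Next I would substitute (\ref{Eq:Quad_FGP_Int}) for $\log\hat V_\tau^\pi$, drop the drift as the author does, and optimize over the coefficient $c$ jointly with $\tau$. Specializing to the symmetric hitting rule $\tau_r := \inf\{t \ge 0 : |\Delta_t \hat L| \ge r\}$ and applying optional stopping to the martingale $(\Delta_t \hat L)^2 - a\, t$ gives $E[\tau_r] = r^2/a$ with the pathwise-constant penalty $c(\Delta_{\tau_r}\hat L)^2 = c r^2$, whereas under a deterministic horizon $T$ the same penalty is the average $c\, E[(\Delta_T\hat L)^2]$ over an entire distribution of displacements. The claim to prove is that there exist $(c^*,r^*)$ with $g(\tau_{r^*};c^*) > \sup_{T,c} g(T;c)$, the right-hand side being the deterministic optimum obtained by maximizing (\ref{Eq:Stat_Arb_QGF_Rate_Ret}) over $T$.

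The hard part will be the ratio-optimization structure: $\sup_\tau g(\tau)$ is not a standard optimal stopping problem. I would apply the Dinkelbach transform, passing to
\begin{align*}
v(\lambda) & := \sup_\tau E\!\left[\log\hat V_\tau^\pi - \lambda\tau\right],
\end{align*}
and characterize the optimum $g^* := \sup_\tau g(\tau)$ as the unique root of $v$. For each fixed $\lambda$ the inner problem is a classical optimal stopping problem driven by $\hat L - \hat L_{\text{restart}}$; by symmetry of the quadratic payoff in $\Delta_t\hat L$, one expects the continuation region to be a symmetric interval $(-r^*(\lambda), r^*(\lambda))$, so the optimizer is indeed a first-exit time of the form conjectured. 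Writing down the HJB variational inequality on this region (tractable in closed form in the Brownian case via the fundamental solutions $1$ and $x^2$ of the infinitesimal generator) and comparing its value to the best one-parameter deterministic rate should then yield strict improvement, and simultaneously produce the explicit free boundary $r^*$ advocated by the conjecture.
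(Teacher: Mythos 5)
You should first note that the paper offers no proof of this statement at all --- it is posed as an open conjecture at the end of Section \ref{Sec-Stat_Arb}, so there is no author argument to compare against and your proposal must stand on its own. Its overall architecture (renewal--reward representation of the long-run growth rate, Dinkelbach transform to handle the ratio criterion, free-boundary analysis of the resulting optimal stopping problem) is the right shape for this kind of problem. But there is a genuine gap at the modeling step, and it is fatal as written: if $\hat L$ is Brownian motion with constant variance rate $a$ and you drop the drift, then $(\Delta_t\hat L)^2 - a t$ is a martingale, so by the very optional-stopping argument you invoke, the linear-in-$c$ term of (\ref{Eq:Quad_FGP_Int}) has \emph{zero} expectation for every integrable stopping time, while the $-c^2$ term is a strictly negative cost. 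Hence $E[\log\hat V_\tau^\pi]\le 0$ for all $(c,\tau)$, the optimal $c$ is $0$, and $g(\tau)\equiv 0$ for every restart rule --- there is nothing to improve, strictly or otherwise. The profitability of the quadratic-$H$ strategy comes precisely from $A_t/(at)$ deviating from $1$, i.e., from increments with nontrivial autocorrelation, which is incompatible with the martingale model in which your two clean identities ($E[\tau_r]=r^2/a$ and the pathwise penalty $cr^2$) hold.

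Conversely, if you take the other branch you mention --- a positively recurrent diffusion such as an Ornstein--Uhlenbeck-type $\hat L$, which does produce $A_T/T < a$ --- then those same two computations fail, since $(\Delta_t\hat L)^2 - \int_0^t a_s\,ds$ is no longer a martingale, and re-centering $l=\hat L_{\mathrm{restart}}$ does not yield i.i.d.\ cycles because mean-reverting dynamics around a fixed level are not translation invariant; the renewal-reward formula for $g(\tau)$ then needs justification, e.g.\ by formulating each cycle in the state variable $Z_t:=\Delta_t\hat L$ with dynamics depending only on $Z$. Moreover, dropping the drift is inconsistent in this regime: in any model reproducing the variogram anomaly, the expected gain is carried entirely by the drift/autocorrelation term --- the paper itself remarks in Example \ref{Ex:Arb_Quad_GF} that the portfolio ``implicitly bets on'' reversal --- so the term you discard is the one generating the profit you are trying to optimize. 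Finally, note that strict improvement over deterministic horizons does not come for free from the free-boundary structure: deterministic times are stopping times, so only the weak inequality is automatic, and you would still need a verification argument showing the continuation region is a genuine interval and the stopped value strictly dominates the optimum of (\ref{Eq:Stat_Arb_QGF_Rate_Ret}).
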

A positive feature of both the methodologies of Section \ref{Sec-Stat_Arb}
is that they are entirely data-driven, and depend only on variance
measurements, which can be estimated in practice with high precision.
Of course if the data suggests a parametric model, then the added
structure could be additionally exploited.

\noindent \begin{center}
\begin{figure}
\noindent \centering{}\includegraphics{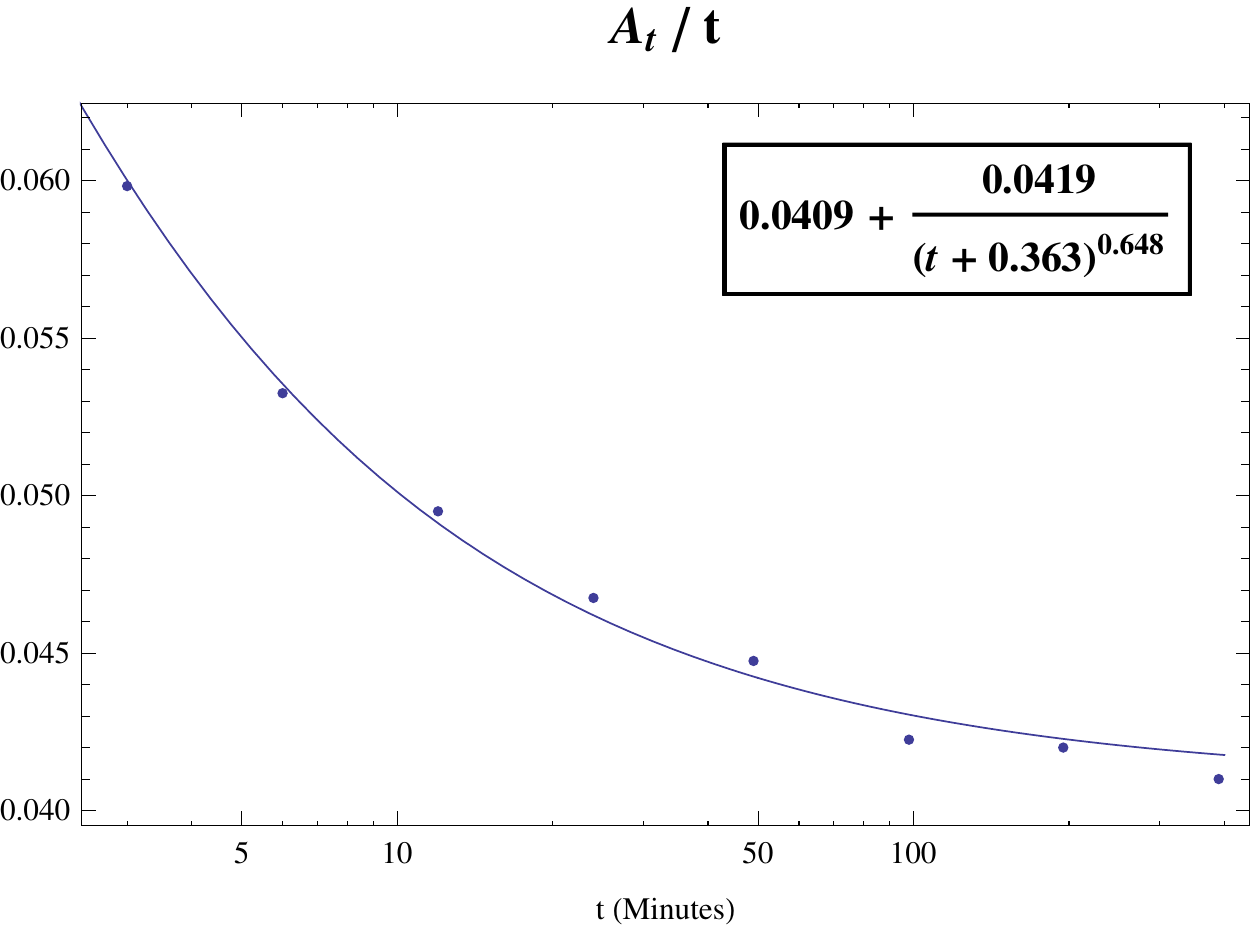}\caption{\label{Fig:Variance}Our fit to the annualized variogram of US-large
cap stocks from \cite[Figure 1]{Art:Fernholz&Maguire:Stat_Arb}. }
\end{figure}

\par\end{center}
\begin{example}
\label{Ex:Arb_Quad_GF}We apply a quadratic generating function to
the data from \cite{Art:Fernholz&Maguire:Stat_Arb}, i.e. of an equal-weight
portfolio on large cap US equities in 2005. The variogram (Figure
\ref{Fig:Variance}) is fitted to
\begin{align*}
\frac{A_{t}}{t} & =C+\frac{U}{\left(t+B\right)^{k}}.
\end{align*}
The form was chosen for fitting fairly well and also because $A$
has a closed-form antiderivative, yielding
\begin{align*}
v(T) & \equiv\frac{1}{T}\int_{0}^{T}A_{t}dt=\frac{CT}{2}+Uk\frac{\left(B+T\right)^{1-k}\left(\left[1-k\right]T-B\right)+B^{2-k}}{T\left(2-k\right)\left(1-k\right)}.
\end{align*}
Assuming, as in Section \ref{Sub-Long-Short_Stat_Arb}, that our
(fastest) rebalancing frequency is $1.5$ minutes, then effectively
$a=A_{1.5}/1.5=.0683$ (annualized). From (\ref{Eq:Stat_Arb_QGF_Rate_Ret})
we numerically obtain $\check{T}=7.13$ minutes, leading to $\check{c}(\check{T})=5.8\times10^{4}$,
and a maximal rate of log return of $244$ (base $e$, per year).
This is an order of magnitude lower than $\hat{\gamma}_{\pi}$ obtained
from the same data via the methodology in Section \ref{Sub-Long-Short_Stat_Arb}.
This may be explained by the approximation made in (\ref{Eq:Stat_Arb_LS_Approx})
breaking down under substantial leverage. Another factor may be that
our dropping the drift term $\hat{\gamma}T/2$ of (\ref{Eq:Stat_Arb_Ret_w_Drift})
underestimates the return of the quadratic FGP: Since $A_{T}/T<a$,
then conditioned on a price movement, a large drift in the opposite
direction is typical, which the portfolio implicitly bets on.
\begin{figure}
\noindent \centering{}\includegraphics{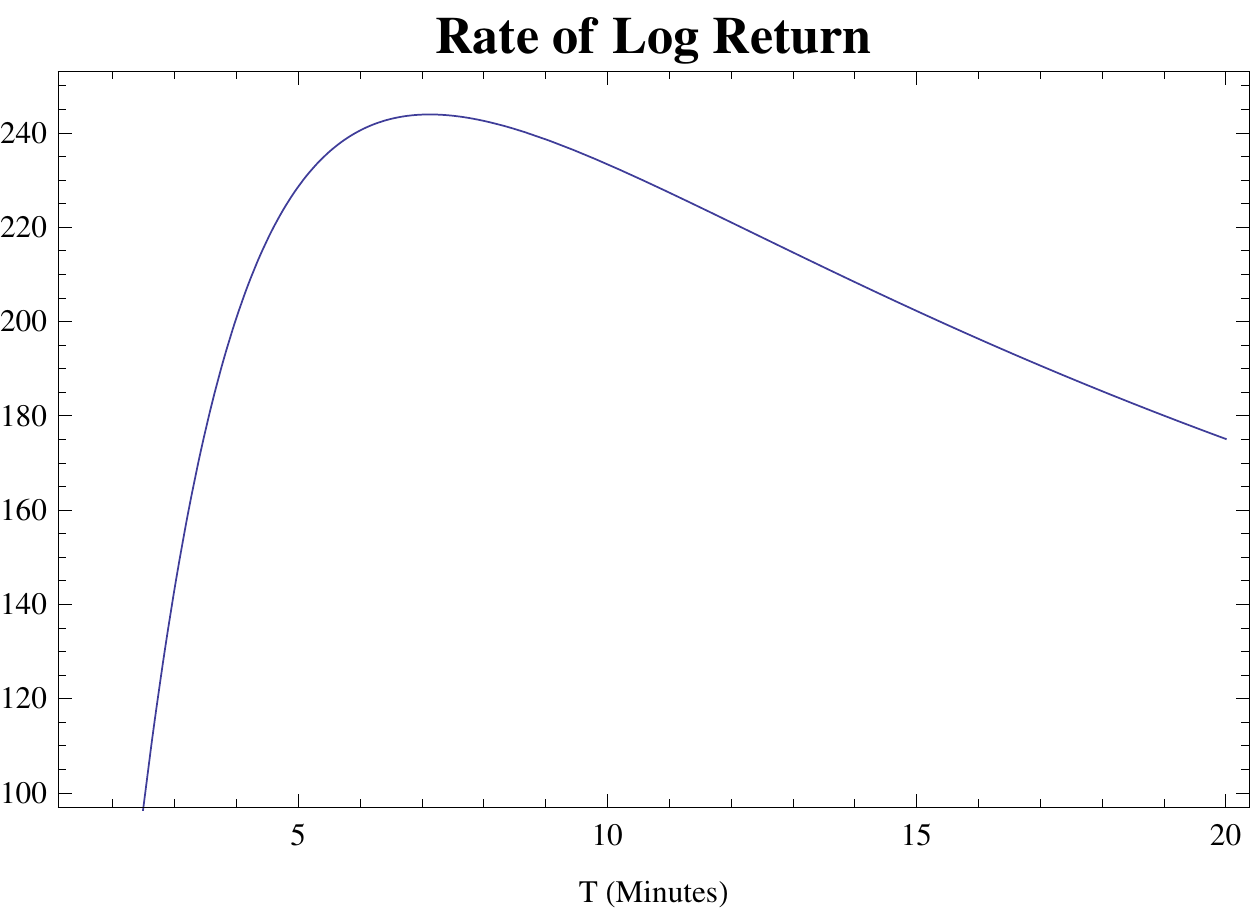}\caption{\label{Fig:Log_Ret}The optimal annualized rate of log-return as a
function of horizon in Example \ref{Ex:Arb_Quad_GF}. }
\end{figure}

\end{example}

\section{\label{Sect-Immunization}Portfolio immunization}

Suppose that we have selected a generating function that is appealing,
except that it produces a generated portfolio that is exposed, relative
to the numéraire, to risk factors that we would rather remain unexposed
to. For example, we may wish to avoid taking on numéraire risk by
maintaining zero excess exposure to it.

One way to remove unwanted risk exposure is to modify the initial
generating function, only in so far as to make it invariant to changes
in the argument along the direction of given risk factors. To be more
concrete, suppose that $H$ is the initial generating function, and
that $\beta^{1},\ldots,\beta^{K}$ are each continuous-path finite
variation processes in $\R^{n}$ satisfying
\begin{align*}
\left(\beta_{t}^{k}\right)^{\prime}\beta_{t}^{j} & =\delta_{kj},\quad1\le k,j\le K,\quad\forall t\ge0.
\end{align*}
The orthonormal set of random vectors $\{\beta_{t}^{1},\ldots,\beta_{t}^{K}\}$
spans the subspace in $\R^{n}$ that we would like to immunize the
generated portfolio's performance to at time $t$. That is, we would
like to find a generating function $\tilde{H}$, similar to $H$,
except also obeying
\begin{align*}
\left(\beta_{t}^{k}\right)^{\prime}\nabla\tilde{H}_{t} & =0,\quad\forall t\ge0.
\end{align*}
For this to hold, $\tilde{H}$ will need to be stochastic, via taking
$\beta:=\{\beta_{i}^{k}\}_{1\le i\le n}^{1\le k\le K}$ as a second
argument. Perhaps the most natural way to modify $H$ in order to
achieve this is to project $\nabla H$ onto the complement of the
span of $\{\beta^{1},\ldots,\beta^{K}\}$ and allow that to determine
the new function $\tilde{H}$. To that end, let $P^{\perp}(y,b)$
be the projection operator that projects $y$ onto the orthogonal
complement of the span of vectors $\{b^{k}\}^{1\le k\le K}$. It is
simplest to specify $P^{\perp}$ in the case where $\{b^{k}\}$ are
orthonormal:
\begin{align*}
P^{\perp}(y,b) & =y-\sum_{k=1}^{K}\left(y^{\prime}b^{k}\right)b^{k},\quad\mbox{for }\left(b^{k}\right)^{\prime}b^{j}=\delta_{kj},\;1\le k,j\le K.
\end{align*}

\begin{prop}
\label{Prop:Master_Immunization}Let $\{\beta^{1},\ldots,\beta^{K}\}$
be $K\le n$ finite variation processes in $\R^{n}$ that are mutually
orthonormal at all times. The generating function $H$ and generated
portfolio
\begin{align*}
\pi & =\lambda\rho+P^{\perp}(\nabla H(P^{\perp}(L^{\rho},\beta)),\beta),\qquad\mbox{where }\lambda=1-\I^{\prime}P^{\perp}(\nabla H(P^{\perp}(L^{\rho},\beta)),\beta),
\end{align*}
satisfy the following master equation:
\begin{align*}
\log\left(\frac{V_{T}^{\pi}}{V_{T}^{\rho}}\right) & =H(P^{\perp}(L_{T}^{\rho},\beta_{T}))-H(P^{\perp}(L_{0}^{\rho},\beta_{0}))+\int_{0}^{T}h_{t}dt-\sum_{k=1}^{K}\int_{0}^{T}\left[\nabla_{b^{k}}H(P^{\perp}(L_{t}^{\rho},\beta_{t}))\right]^{\prime}d\beta_{t}^{k},
\end{align*}
where
\begin{align*}
h & =\gamma_{\pi}^{*}-\lambda\gamma_{\rho}^{*}-\frac{1}{2}\biggl(\sum_{i,j=1}^{n}a_{ij}^{\rho}D_{ij}^{2}H(P^{\perp}(L^{\rho},\beta)),\beta)-2\sum_{k=1}^{K}\left(b^{k}\right)^{\prime}a^{\rho}D^{2}H(P^{\perp}(L^{\rho},\beta))b^{k}\\
 & \quad+\sum_{k=1}^{K}\sum_{k^{\prime}=1}^{K}\left(b^{k^{\prime}}\right)^{\prime}a^{\rho}b^{k}\left(b^{k^{\prime}}\right)^{\prime}D^{2}H(P^{\perp}(L^{\rho},\beta))b^{k}\biggr),\\
\nabla_{b^{k}}H(P^{\perp}(y,b)) & =-\left[\left(b^{k}\right)^{\prime}\nabla H(P^{\perp}(y,b))\right]y_{i}-\left[\left(b^{k}\right)^{\prime}y\right]D_{i}H(P^{\perp}(y,b)).
\end{align*}
\end{prop}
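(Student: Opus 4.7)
The plan is to recognize Proposition \ref{Prop:Master_Immunization} as a direct instance of the stochastic master equation, Theorem \ref{Thm:Master_Eq_Stoch_GF}, applied to the \emph{auxiliary} generating function
\begin{align*}
\tilde{H}(y,b) & := H\bigl(P^{\perp}(y,b)\bigr),\qquad y\in\R^{n},\; b=(b^{1},\ldots,b^{K})\in\R^{n\times K},
\end{align*}
with $F_{t} := (\beta_{t}^{1},\ldots,\beta_{t}^{K})$ playing the role of the finite-variation auxiliary process (of dimension $k=nK$). Since each $\beta^{k}$ is continuous and of finite variation by assumption, and $H\in C^{2}$ implies $\tilde{H}\in C^{2,1}$, the hypotheses of Theorem \ref{Thm:Master_Eq_Stoch_GF} are satisfied and the entire result will follow from the explicit evaluation of the derivatives $\nabla_{l}\tilde{H}$, $\nabla_{f}\tilde{H}$, and $D^{2}_{l_{i}l_{j}}\tilde{H}$.

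First I would compute $\nabla_{l}\tilde{H}$ by the chain rule. Writing $z := P^{\perp}(y,b) = y-\sum_{k}(y'b^{k})b^{k}$ and $J:=I-\sum_{k}b^{k}(b^{k})^{\prime}$ for the associated projection matrix, one has $\partial z_{m}/\partial y_{i}=J_{mi}$, so $\nabla_{l}\tilde{H}(y,b) = J\,\nabla H(z) = P^{\perp}(\nabla H(P^{\perp}(y,b)),b)$. Substituting this into the formula (\ref{Eq:FGP_Stoch_Def}) of Theorem \ref{Thm:Master_Eq_Stoch_GF} recovers the stated form of $\pi$ and of $\lambda$. Next, differentiating $z_{m}=y_{m}-\sum_{k^{\prime}}(y^{\prime}b^{k^{\prime}})b^{k^{\prime}}_{m}$ with respect to $b^{k}_{i}$ yields $\partial z_{m}/\partial b^{k}_{i}=-y_{i}b^{k}_{m}-(y^{\prime}b^{k})\delta_{im}$, so the chain rule gives
\begin{align*}
\frac{\partial\tilde{H}}{\partial b^{k}_{i}}(y,b) & = -y_{i}\bigl[(b^{k})^{\prime}\nabla H(z)\bigr] - (y^{\prime}b^{k})\,D_{i}H(z),
\end{align*}
which is precisely the formula for $\nabla_{b^{k}}H(P^{\perp}(y,b))$ stated in the proposition. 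The correction integral $\int_{0}^{T}[\nabla_{f}\tilde{H}]^{\prime}dF_{t}$ in (\ref{Eq:Master_Stoch}) then decomposes as $\sum_{k=1}^{K}\int_{0}^{T}[\nabla_{b^{k}}H(P^{\perp}(L^{\rho}_{t},\beta_{t}))]^{\prime}d\beta^{k}_{t}$, matching the stated master equation.

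The final and most calculation-heavy step is expanding $\sum_{i,j=1}^{n}a_{ij}^{\rho}D^{2}_{l_{i}l_{j}}\tilde{H}$ to extract the three terms in the formula for $h$. Differentiating $\nabla_{l}\tilde{H}=J\,\nabla H(z)$ once more in $y$ and using $\partial z_{m}/\partial y_{j}=J_{mj}$ yields $D^{2}_{l_{i}l_{j}}\tilde{H} = [J\,D^{2}H(z)\,J]_{ij}$. Hence $\sum_{ij}a^{\rho}_{ij}D^{2}_{l_{i}l_{j}}\tilde{H}=\tr(a^{\rho}J D^{2}H(z) J)$, which I would expand by substituting $J=I-\sum_{k}b^{k}(b^{k})^{\prime}$ and using the cyclic property of the trace together with the symmetry of $a^{\rho}$ and $D^{2}H$. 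The four resulting traces collapse, by symmetry, into the combination $\sum_{ij}a_{ij}^{\rho}D_{ij}^{2}H(z)-2\sum_{k}(b^{k})^{\prime}a^{\rho}D^{2}H(z)b^{k}+\sum_{k,k'}(b^{k'})^{\prime}a^{\rho}b^{k}(b^{k'})^{\prime}D^{2}H(z)b^{k}$ exhibited in the statement. Inserting this into the expression for $h$ from Theorem \ref{Thm:Master_Eq_Stoch_GF} finishes the proof. The main obstacle is purely notational: keeping the bookkeeping of the four cross-terms in the $JMJ$ expansion straight; there is no analytic subtlety beyond the chain rule and the cyclic trace identity, and the orthonormality of $\{\beta^{k}\}$ is not actually needed for the identity itself (it enters only so that $P^{\perp}(\cdot,\beta_{t})$ is a genuine orthogonal projection, which justifies the interpretation of the result as immunization).
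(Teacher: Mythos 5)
Your proposal is correct and takes essentially the same route as the paper: both define $\tilde{H}(y,b):=H(P^{\perp}(y,b))$ with $F=\beta$ as the finite-variation auxiliary argument, apply Theorem \ref{Thm:Master_Eq_Stoch_GF}, and read off the stated portfolio, correction integral, and $h$ from the chain-rule derivatives $\nabla_{l}\tilde{H}=P^{\perp}\bigl(\nabla H(P^{\perp}(y,b)),b\bigr)$, $D_{l_{i}l_{j}}^{2}\tilde{H}=[JD^{2}H(z)J]_{ij}$, and $\partial\tilde{H}/\partial b_{i}^{k}$. Your explicit trace expansion of $\tr(a^{\rho}JD^{2}H(z)J)$ and your observation that orthonormality is needed only for the immunization interpretation (the vanishing of $(\beta^{k})^{\prime}(\pi-\lambda\rho)$), not for the master-equation identity itself, simply make explicit what the paper leaves as a listed computation.
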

\begin{proof}
Define
\begin{align*}
\tilde{H}:\R^{n}\times\R^{K\times n} & \to\R,\\
\tilde{H}(y,b) & =H(P^{\perp}(y,b)),\\
\mbox{where }P^{\perp}(y,b) & =y-\sum_{k=1}^{K}\left(y^{\prime}b^{k}\right)b^{k}.
\end{align*}
The result is then obtained from a direct application of Theorem \ref{Thm:Master_Eq_Stoch_GF}
to $\tilde{H}$. The relevant derivatives are
\begin{align*}
\nabla_{y}\tilde{H}(y,b) & =\nabla H(P^{\perp}(y,b))-\sum_{k=1}^{K}\left[\left(b^{k}\right)^{\prime}\nabla H(P^{\perp}(y,b))\right]b^{k},\\
 & =P^{\perp}(\nabla H(P^{\perp}(y,b)),b);\\
\frac{\partial^{2}}{\partial y_{i}\partial y_{j}}\tilde{H}(y,b) & =D_{ij}^{2}H(P^{\perp}(y,b))-\sum_{k=1}^{K}b_{j}^{k}\left[\left(b^{k}\right)^{\prime}D^{2}H(P^{\perp}(y,b))\right]_{i}-\sum_{k=1}^{K}b_{i}^{k}\left[\left(b^{k}\right)^{\prime}D^{2}H(P^{\perp}(y,b))\right]_{j}\\
 & \quad+\sum_{k=1}^{K}\sum_{k^{\prime}=1}^{K}b_{i}^{k}b_{j}^{k^{\prime}}\left(b^{k^{\prime}}\right)^{\prime}D^{2}H(P^{\perp}(y,b))b^{k};
\end{align*}
\begin{align*}
\frac{\partial}{\partial b_{i}^{k}}\tilde{H}(y,b) & =-\left(\left(b^{k}\right)^{\prime}\nabla H(P^{\perp}(y,b))\right)y_{i}-\left(\left(b^{k}\right)^{\prime}y\right)D_{i}H(P^{\perp}(y,b)).\tag*{\qedhere}
\end{align*}

\end{proof}
The characterization of the performance of the immunized FGP
given by Proposition \ref{Prop:Master_Immunization} is not so pretty,
but the idea of what has changed from the non-immunized FGP is straightforward.
The relative wealth process $\log\left(V^{\pi}/V^{\rho}\right)$ of
the generated portfolio of Proposition \ref{Prop:Master_Immunization}
is locally not exposed to changes in $L^{\rho}$ along the linear
span of $\{\beta^{1},\ldots,\beta^{K}\}$. This can be seen from
\begin{align*}
\left(\beta^{k}\right)^{\prime}\nabla_{y}H(P^{\perp}(y,b))\mid_{(L^{\rho},\beta)} & =\left(\beta^{k}\right)^{\prime}\left(\pi-\lambda\rho\right),\\
 & =\left(\beta^{k}\right)^{\prime}P^{\perp}(\nabla H(P^{\perp}(L^{\rho},\beta)),\beta),\\
 & =0,\quad1\le k\le K.
\end{align*}

\begin{example}[Numéraire exposure]
Consider the case where immunization is desired with respect to relative
exposure to the numéraire. The appropriate $\beta$ to use to hedge
against excess numéraire exposure is $1$ less than the ``CAPM $\beta$''
(see e.g. \cite{Book:Sharpe_Port_Theory_and_Cap_Markets}). The instantaneous
version of this parameter is
\begin{align*}
\tilde{\beta}_{i,t} & =\frac{\frac{d}{dt}\left\langle \log V^{\rho},L_{i}^{\rho}\right\rangle _{t}}{\frac{d}{dt}\left\langle \log V^{\rho}\right\rangle _{t}}=\frac{[a_{t}r_{t}]_{i}-a_{\rho\rho,t}}{a_{\rho\rho,t}}=\frac{[a_{t}\rho_{t}]_{i}}{a_{\rho\rho,t}}-1,\\
 & =\beta_{i,t}^{\left(\mbox{CAPM}\right)}-1.
\end{align*}
Although theoretically this instantaneous $\beta$ may not be a continuous-path
finite variation process, in practice the instantaneous $\beta$ is
not observable, and $\beta$ is typically estimated by time-averaging
over some historical time window. The practical \emph{and }theoretical
result of such a time-averaging procedure is a continuous-path finite
variation process. For example, the estimator might have the theoretical
form
\begin{align*}
\tilde{\beta}_{i,t} & =\frac{\frac{1}{\Delta t}\int_{t-\Delta t}^{t}[a_{s}\rho_{s}]_{i}ds}{\frac{1}{\Delta t}\int_{t-\Delta t}^{t}a_{\rho\rho,s}ds}-1,
\end{align*}
for some $\Delta t>0$. In practice the integrals are approximated
by sums of discretely sampled values.
\end{example}

\begin{example}[Price level]
Another possibly desirable immunization is to hedge out any exposure
to a rise or fall in the overall price level. This can be done by
choosing the constant vector $\beta=n{}^{-1/2}\I$.
\end{example}

\section{\noindent \label{Sec-Mirror_Ports}Mirror portfolios}

In this section we use generating functions to elaborate some of the
properties of \emph{mirror portfolios}, introduced in \cite{Art:FernKaratzKard:DiversityAndRelArb:2005}.
In that paper, mirror portfolios were used to construct arbitrages
over arbitrarily short time horizons in markets that are both diverse
and uniformly elliptic. A passive portfolio that is short any asset
is typically inadmissible, due to each asset's price usually being
unbounded from above. Hence, flipping the sign of the shares invested
in assets (while adjusting the weight in the numéraire so that weights
sum to one) does not in general produce a suitable notion of a reflected
portfolio. Mirror portfolios accomplish that task.

\noindent
\begin{defn}
\noindent If $\pi$ and $\rho$ are portfolios, then the portfolio
\begin{align*}
\tilde{\pi}^{[q],\rho}:= & q\pi+(1-q)\rho,\quad q\in\R,
\end{align*}
is called the \emph{$q$-mirror of $\pi$ with respect to $\rho$}.
When $\rho$ is fully invested in the money market, then $\tilde{\pi}^{[q],\rho}=:\tilde{\pi}^{[q]}$,
abbreviated the\emph{ $q$-mirror of $\pi$}. For $q=-1$, $\tilde{\pi}^{[-1],\rho}=:\tilde{\pi}^{\rho}$,
called simply the \emph{mirror of} $\pi$ \emph{with respect to $\rho$}.
\end{defn}
\noindent For portfolios $\pi$ and $\rho$, the $q$-mirror of $\pi$
with respect to $\rho$ satisfies (\ref{Eq:Integrability_Cond_to_be_a_Port}),
so is also a portfolio. As an example, if $X_{1}$ is the money market,
then the portfolio $e_{i}:=(0,\ldots,0,1,0,\ldots)$ has the mirror
$\tilde{e}_{i}=(2,0,\ldots,0,-1,0,\ldots)$.
\begin{prop}
\noindent The $q$-mirror of $\pi$ with respect to $\rho$ is functionally
generated from the market $(X_{1},X_{2}):=(V^{\rho},V^{\pi})$ by
the generating function $H(y_{1},y_{2}):=\left(1-q\right)y{}_{1}+qy_{2}$,
and thus satisfies
\begin{align*}
\log V^{\tilde{\pi}^{[q],\rho}} & =(1-q)\log V^{\rho}+q\log V^{\pi}+\int h_{s}ds,\\
\log\left(\frac{V^{\tilde{\pi}^{[q],\rho}}}{V^{\rho}}\right) & =q\log\left(\frac{V^{\pi}}{V^{\rho}}\right)+\int h_{s}ds,\\
\mbox{where }h=\gamma_{\tilde{\pi}^{[q],\rho}}^{*} & =\frac{1}{2}\left((1-q)a_{11}+qa_{22}-\left[\left(1-q\right)^{2}a_{11}+2q(1-q)a_{12}+q^{2}a_{22}\right]\right).
\end{align*}
If, additionally, $\rho$ is the money market, then
\begin{align*}
\log\hat{V}^{\tilde{\pi}^{[q]}} & =q\log\hat{V}^{\pi}+\frac{q(1-q)}{2}\left\langle \log V^{\pi}\right\rangle .
\end{align*}
If, additionally, $q=-1$, then
\begin{align}
\log\hat{V}^{\tilde{\pi}} & =-\log\hat{V}^{\pi}-\left\langle \log V^{\pi}\right\rangle ,\label{Eq:Mirror_Master}
\end{align}
\end{prop}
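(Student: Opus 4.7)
The plan is to recognize the three master-equation identities as successive specializations of Theorem \ref{Thm:Master_Eq_Stoch_GF} applied to a suitably chosen two-asset submarket. By Remark \ref{Rem:Synthetic}, form the submarket $\tilde{X} := (V^\rho, V^\pi)$; a portfolio with weights $(w_1, w_2)$, $w_1 + w_2 = 1$, on $\tilde{X}$ corresponds to the portfolio $w_1 \rho + w_2 \pi$ on the underlying market, so weights $(1-q, q)$ in $\tilde{X}$ realize $\tilde\pi^{[q],\rho}$.

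First I would verify that the linear function $H(y_1, y_2) := (1-q)y_1 + q y_2$ generates $\tilde\pi^{[q],\rho}$ via (\ref{Eq:FGP_def}): since $\nabla H = (1-q, q)$ and $\I^\prime \nabla H = 1$, we have $\lambda = 0$, so the generated weights on $\tilde{X}$ are exactly $(1-q, q)$, independent of the choice of num\'eraire. Next, I would invoke the master equation (\ref{Eq:Master_General}) for this $H$, taking $V^\rho$ itself (the first asset of $\tilde{X}$, i.e.\ the trivial passive portfolio $(1, 0)$ on $\tilde{X}$) as num\'eraire. Because $H$ is linear, $D^2 H \equiv 0$, and because $\lambda = 0$, the correction term (\ref{Eq:vol_cap_def_std}) collapses to $h = \gamma^*_{\tilde\pi^{[q],\rho}}$. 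The discounted log-prices on $\tilde{X}$ under this num\'eraire are $(0, \log(V^\pi / V^\rho))$, so $H$ contributes $q \log(V^\pi / V^\rho)$ to the right-hand side (using the standing normalization $V^\rho_0 = V^\pi_0 = 1$). This yields the second displayed identity, and adding $\log V^\rho$ to both sides gives the first.

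It then remains to expand $\gamma^*_{\tilde\pi^{[q],\rho}}$ using the definition of the excess growth rate on the two-asset market with weights $(1-q, q)$ and covariances $a_{11}, a_{12}, a_{22}$ inherited from $(\log V^\rho, \log V^\pi)$; this produces the claimed quadratic in $q$. For the first specialization, if $\rho$ is the money market then $V^\rho$ has finite variation, so $a_{11} = a_{12} = 0$ and $h$ collapses to $\frac{q(1-q)}{2} \frac{d}{dt}\langle \log V^\pi \rangle$, yielding the third display upon integration; setting $q = -1$ then gives (\ref{Eq:Mirror_Master}). There is no serious obstacle: the argument is a single application of the master equation to a linear generating function, and the only care required lies in correctly identifying the submarket and weight vector.
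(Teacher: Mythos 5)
Your proof is correct and essentially matches the paper's: where the paper simply notes that $H$ is translation equivariant with $D^{2}H=0$ and cites Proposition \ref{Prop:Master_Eq_for_PH_GF}, you verify $\lambda=0$ and $D^{2}H=0$ directly and apply the master equation (\ref{Eq:Master_General}) on the submarket $(V^{\rho},V^{\pi})$ with $V^{\rho}$ as num\'eraire, which is the same computation. Your specializations (plugging in $a_{11}=a_{12}=0$ for the money-market num\'eraire, then $q=-1$) coincide with the paper's.
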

\begin{proof}
\noindent $H$ is translation equivariant and $D^{2}H=0$, so we may
apply Proposition \ref{Prop:Master_Eq_for_PH_GF} to obtain the first
result. The others are easy consequences of plugging in $a_{11}=a_{12}=0$
when $\rho$ is the money market.
\end{proof}
The following corollary shows that under typical market conditions
a given portfolio $\pi$ or its mirror $\tilde{\pi}$ or both will
lose all wealth, asymptotically.
\begin{cor}
\noindent \label{Cor:Mirror_Decay}Suppose that both of the following
hold:
\begin{enumerate}
\item \noindent
\begin{align}
\liminf_{t\to\infty}\frac{1}{t}\left\langle \log V^{\pi}\right\rangle _{t} & >0,\quad\mbox{a.s.}\label{Eq:Non-negl_Asymp_QV}
\end{align}

\item
\begin{align}
\lim_{t\to\infty}\frac{\log\log t}{t^{2}}\left\langle \log V^{\pi}\right\rangle _{t} & =0,\quad\mbox{a.s.}\label{Eq:Iterated_Log_Cond}
\end{align}

\end{enumerate}
Then
\begin{align*}
P\left(\left\{ \lim_{t\to\infty}\hat{V}_{t}^{\pi}=0\right\} \bigcup\left\{ \lim_{t\to\infty}\hat{V}_{t}^{\tilde{\pi}}=0\right\} \right) & =1.
\end{align*}
\end{cor}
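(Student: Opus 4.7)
The plan is to combine the mirror master equation (\ref{Eq:Mirror_Master}) with the law of the iterated logarithm for continuous local martingales. Rearranging (\ref{Eq:Mirror_Master}) yields the pathwise identity
\[
\log\hat{V}^{\pi}_{t} + \log\hat{V}^{\tilde{\pi}}_{t} = -\langle\log V^{\pi}\rangle_{t},
\]
and hypothesis (i) forces the right-hand side to $-\infty$ almost surely at (at least) a linear rate. Taking exponentials, $\hat{V}^{\pi}_{t}\,\hat{V}^{\tilde{\pi}}_{t} \to 0$ a.s., so the remaining task is to upgrade this vanishing of the product to vanishing of one of the two factors individually.

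To control the oscillations of the two wealth processes, I would decompose $\log V^{\pi}_{t} = D_{t} + M_{t}$ into its bounded-variation drift $D$ and continuous local martingale part $M$, so that $\langle M\rangle_{t} = \langle\log V^{\pi}\rangle_{t}$. On the full-measure event $\{\langle M\rangle_{t}\to\infty\}$ secured by (i), Dambis--Dubins--Schwarz represents $M$ as a time-changed Brownian motion, and the LIL gives $|M_{t}| \leq (1+o(1))\sqrt{2\langle M\rangle_{t}\log\log\langle M\rangle_{t}}$ a.s. Hypothesis (ii) combined with (i) implies $\langle M\rangle_{t} = O(t^{2}/\log\log t)$ and $\log\log\langle M\rangle_{t} = O(\log\log t)$, so $|M_{t}| = o(t)$; invoking (i) once more, $M_{t}/\langle M\rangle_{t}\to 0$ a.s.

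Fix a path $\omega$ in this full-measure event on which $\hat{V}^{\pi}_{t}\not\to 0$. Then there exist a random constant $C$ and a subsequence $t_{k}\to\infty$ with $\log\hat{V}^{\pi}_{t_{k}} \geq -C$, which forces $D_{t_{k}} \geq -C - M_{t_{k}}$. Substituting back into (\ref{Eq:Mirror_Master}) yields
\[
\log\hat{V}^{\tilde{\pi}}_{t_{k}} \leq C - M_{t_{k}} - \langle M\rangle_{t_{k}} = -\langle M\rangle_{t_{k}}\bigl(1+o(1)\bigr) \longrightarrow -\infty.
\]
The step I expect to require the most care is promoting this subsequential convergence to $\lim_{t\to\infty}\hat{V}^{\tilde{\pi}}_{t} = 0$ along the full index set, that is, ruling out the oscillatory scenario in which $\hat{V}^{\pi}$ and $\hat{V}^{\tilde{\pi}}$ alternately take turns being bounded below on complementary subsequences. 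I anticipate closing this gap by a symmetric application of the preceding LIL argument with the roles of $\pi$ and $\tilde{\pi}$ exchanged, coupled to the pathwise constraint (\ref{Eq:Mirror_Master}), so that the drift $D$ cannot simultaneously satisfy the two incompatible oscillatory bounds that both failures of convergence would require.
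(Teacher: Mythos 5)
Your ingredients coincide with the paper's: the pathwise identity $\log\hat{V}_{t}^{\pi}+\log\hat{V}_{t}^{\tilde{\pi}}=-\left\langle \log V^{\pi}\right\rangle _{t}$ from (\ref{Eq:Mirror_Master}), and the law of the iterated logarithm, with hypothesis (\ref{Eq:Iterated_Log_Cond}) making the local-martingale part $M$ of $\log\hat{V}^{\pi}$ of size $o(t)$ --- this is exactly how the paper obtains (\ref{Eq:Long_Term_Growth}). Your subsequential computation is correct modulo a harmless slip (the extra $-M_{t_{k}}$ is spurious: the identity alone gives $\log\hat{V}_{t_{k}}^{\tilde{\pi}}\le C-\left\langle \log V^{\pi}\right\rangle _{t_{k}}$, no martingale estimate needed). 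The genuine gap is the step you yourself defer to the final paragraph, and the fix you sketch there would fail. The LIL controls only $M$; neither (\ref{Eq:Non-negl_Asymp_QV}) nor (\ref{Eq:Iterated_Log_Cond}) constrains the finite-variation part $D_{t}=\int_{0}^{t}\gamma_{\pi,s}ds$, which is free to swing at scale $t$. For instance, with one risky asset of constant volatility $\sigma$ and bounded drift chosen so that $D_{t}=-\sigma^{2}t\left(1+\sin\log t\right)$, both hypotheses hold, $\log\hat{V}_{t}^{\pi}=D_{t}+M_{t}$ equals $M_{t_{k}}$ along times with $\sin\log t_{k}=-1$, and $\log\hat{V}_{t}^{\tilde{\pi}}=\sigma^{2}t\sin(\log t)-M_{t}$ tends to $+\infty$ along times with $\sin\log s_{k}=1$. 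So the drift can satisfy both of your ``incompatible'' oscillatory bounds simultaneously: a symmetric application of the LIL produces no contradiction, and the alternating scenario cannot be excluded by the kind of pathwise argument you propose.

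By contrast, the paper never descends to subsequences: after establishing (\ref{Eq:Long_Term_Growth}) it works with Ces\`aro rates, adding (\ref{Eq:Long_Term_Growth}) to the identity to get $\lim_{t\to\infty}\frac{1}{t}\left(\log\hat{V}_{t}^{\pi}+\log\hat{V}_{t}^{\tilde{\pi}}+\left\langle \log V^{\pi}\right\rangle _{t}\right)=0$ a.s., hence $\limsup_{t\to\infty}\frac{1}{t}\left(\log\hat{V}_{t}^{\pi}+\log\hat{V}_{t}^{\tilde{\pi}}\right)<0$ a.s. by (\ref{Eq:Non-negl_Asymp_QV}), and then concludes in a single step that almost surely one of the two log-wealths has strictly negative exponential rate, which forces the corresponding wealth to $0$. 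You should be aware that this final step is precisely the rate-level form of the dichotomy you flagged: two functions can each have nonnegative $\limsup$ of $x_{t}/t$ while their sum stays below $-\delta t$ for all large $t$ (the example above realizes this). So your instinct about where the difficulty sits is sound, and your proposal is honest in marking it as open; but as written it is not a proof, and the closing maneuver you anticipate is not available --- completing the argument requires either the paper's rate-level formulation together with a justification of its concluding dichotomy, or some additional structural input on the drift.
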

\begin{proof}
\noindent Under (\ref{Eq:Iterated_Log_Cond}) the law of the iterated
logarithm \cite[p. 112]{Book:KS:Brownian:1991} implies that
\begin{align}
\lim_{t\to\infty}\frac{1}{t}\left(\log\hat{V}_{t}^{\pi}-\int_{0}^{t}\gamma_{\pi,s}ds\right) & =0,\quad\mbox{a.s.,}\label{Eq:Long_Term_Growth}
\end{align}
since the process inside the parentheses is a continuous local martingale.
From this (\ref{Eq:Mirror_Master}) yields
\begin{align*}
\lim_{t\to\infty}\frac{1}{t}\left(\log\hat{V}_{t}^{\tilde{\pi}}+\left\langle \log V^{\pi}\right\rangle _{t}+\int_{0}^{t}\gamma_{\pi,s}ds\right) & =0,\quad\mbox{a.s.,}\\
\imply\lim_{t\to\infty}\frac{1}{t}\left(\log\hat{V}_{t}^{\pi}+\log\hat{V}_{t}^{\tilde{\pi}}+\left\langle \log V^{\pi}\right\rangle _{t}\right) & =0,\quad\mbox{a.s.,}
\end{align*}
where the second line follows from adding (\ref{Eq:Long_Term_Growth})
to the first. Then by (\ref{Eq:Non-negl_Asymp_QV})
\begin{align*}
\limsup_{t\to\infty}\frac{1}{t}\left(\log\hat{V}_{t}^{\pi}+\log\hat{V}_{t}^{\tilde{\pi}}\right) & <0,\quad\mbox{a.s.,}\\
\imply P\left(\left\{ \limsup_{t\to\infty}\frac{1}{t}\log\hat{V}_{t}^{\pi}<0\right\} \bigcup\left\{ \limsup_{t\to\infty}\frac{1}{t}\log\hat{V}_{t}^{\tilde{\pi}}<0\right\} \right) & =1.\tag*{\qedhere}
\end{align*}

\end{proof}
\noindent Equation (\ref{Eq:Mirror_Master}) shows that at least one
and possibly both of $\log\hat{V}^{\pi}$ and $\log\hat{V}^{\tilde{\pi}}$
have negative drift at any time when $\left\langle \log V^{\pi}\right\rangle $
is increasing. The preceding corollary shows that a portfolio, its
mirror, or possibly both, lose all wealth relative to the money market
asymptotically, assuming that the asymptotic local variance rate does
not approach 0. A portfolio whose wealth tends to $0$ asymptotically
would typically be considered a poor long-term investment. In this
sense, ``mirroring'' a poor investment may still be a poor investment.
A concrete example is a market with a risk-free rate of $0$, and
one risky asset whose price is a geometric Brownian motion with $\gamma=-\frac{1}{2}\sigma^{2}$.
Then full investment in the risky asset loses all wealth asymptotically,
as does its mirror, which also has drift $\tilde{\gamma}=-\frac{1}{2}\sigma^{2}$
by (\ref{Eq:Mirror_Master}).

\section{\label{Sect:Conclusion}Concluding remarks}

The key analytical benefit of portfolios that are functionally generated
is the representation of their return relative to a numéraire via
a pathwise master equation free of stochastic integrals. The generalizations
of FGPs presented here expand the class of portfolio-numéraire pairs
that may be analyzed in this way. The dynamism of FGPs is enhanced
by the freedom to incorporate processes having continuous, finite-variation
paths as auxiliary arguments to generating functions. This allows
FGPs to be sensitive to changing market conditions beyond the price
changes of the assets. The main applications that we have shown are
(1) direct, intuitive comparison of the performance of FGPs, useful
for scenario analysis (Section \ref{Sub:Scenario}),  (2) statistical
arbitrage based purely on variance data, (3) portfolio immunization,
and (4) mirror portfolios analysis.

It is a shortcoming of this work that transaction costs are ignored
throughout. They are especially important to the performance of the
statistical arbitrage portfolios examined in Section \ref{Sec-Stat_Arb}.
The inclusion of transaction costs in a tractable way for FGPs in
$n$-asset markets is a topic of ongoing research. Due to its complexity,
it warrants a separate paper that the author hopes will be forthcoming
in the future.

\bibliographystyle{hplain}

\end{document}